\pgfplotsset{width=10cm,height=10cm,compat=1.9}
\newtheorem*{theorem}{Theorem}
\definecolor{darkblue}{rgb}{0.1,0.1,0.7}
\tikzset{cross/.style={path picture={
      \draw[black]
            (path picture bounding box.south east) --
            (path picture bounding box.north west)
            (path picture bounding box.south west) --
            (path picture bounding box.north east);}}}
\newcommand{\lsp}{\hspace{0.5pt}}
\title{A note on defect stability in $d=4-\varepsilon$}
\author{William H.\ Pannell\emails{(\href{mailto:william.pannell@kcl.ac.uk}{william.pannell})}}
\affiliation{Department of Mathematics, King's College London, Strand, London WC2R 2LS, United Kingdom}
\abstract{We explore the space of scalar line, surface and interface defect field theories in $d=4-\varepsilon$ by examining their stability properties under generic deformations. Examples are known of multiple stable line defect Conformal Field Theories (dCFTs) existing simultaneously, unlike the case of normal multiscalar field theories where a theorem by Michel guarantees that the stable fixed point is the unique global minimum of a so-called $A$-function. We prove that a suitable modification of Michel's theorem survives for line defect theories, with fixed points locally rather than globally minimizing an $A$-function along a specified surface in coupling space and provide a novel classification of the fixed points in the hypertetrahedral line defect model. For surface defects Michel's theorem survives almost untouched, and we explore bulk models for which the symmetry preserving defect is the unique stable point. For interface defects we prove only the weaker condition that there exist no fixed points stable against generic deformations for $N\geq 6$.
}
\date{August 2024}
\begin{document}

\maketitle

\toc

\section{Introduction}
Wilson's renormalization group (RG) methods\cite{Wilson:1971bg,Wilson:1971dh} provide powerful tools for investigating the characteristics of a wide range of critical phenomena. In particular, the investigation of RG fixed points in $d=4-\varepsilon$ provides strong estimates for the critical exponents of three-dimensional statistical models at second-order phase transitions\cite{Wilson:1971dc}. Multiscalar models, with $N$ massless scalar fields $\phi_i$, $i=1,\ldots, N$, interacting via the Lagrangian
\begin{equation}
    \frac{1}{2}\partial_\mu \phi_i\partial^\mu\phi_i+\frac{\lambda_{ijkl}}{4!}\phi_i\phi_j\phi_k\phi_l\,,
\end{equation}
have been well-studied in this regard\cite{Osborn:2017ucf,Osborn:2020cnf}. In the Landau-Ginzburg paradigm the fixed points of this Lagrangian contain information about the phase transitions in many interesting physical models\cite{Pelissetto:2000ek}, such as the liquid-gas interface\cite{Hubbard,Brilliantov}, magnetic anisotropies in ferromagnetic materials with a hypercubic crystalline structure\cite{Aharony:1973zz}, and the n-component Potts spin model\cite{Zia:1975ha}. These fixed points satisfy
\begin{equation}
    \beta_{ijkl}=-\varepsilon\lambda_{ijkl}+\left(\lambda_{ijmn}\lambda_{mnkl}+\text{Perms.}\right)=0\,,
    \label{eq:betabulk}
\end{equation}
where $\beta_{ijkl}=d\lambda_{ijkl}/d\ln{\mu}$ governs the renormalization of the interaction couplings $\lambda_{ijkl}$ as a function of the energy scale $\mu$. It was proven by Michel\cite{Michel:1983in} that stable fixed points of (\ref{eq:betabulk}) globally minimize a particular function, $A$, and are thus unique whenever they exist. This $A$ is constructed by noting that the beta function is gradient, i.e. it can be written in the form
\begin{equation}
    \beta^{I}=\frac{\partial A(\lambda)}{\partial \lambda_{I}}\,,
    \label{eq:bulkgraddef}
\end{equation}
where $I=(ijkl)$ is a generalized index. This expression severely constrains the behavior of renormalization flows in multiscalar models, as such an $A$ must monotonically decrease along flows towards the IR
\begin{equation}
    \frac{d A}{d\ln{\mu}}=\partial_{I}A\beta^{I}=\partial_{I}A\partial^{I}A\geq0\,,
    \label{eq:monotonicity}
\end{equation}
and it is this fact which permits Michel's theorem. More generically, gradient beta functions may be written in the form
\begin{equation}
\beta^I=T^{IJ}(\lambda)\partial_J A(\lambda)\,,
\label{eq:gradflowdef}
\end{equation}
where $T^{IJ}$ is a symmetric metric on coupling space assumed to be positive definite. The inclusion of a metric in this definition will not alter the monotonicity properties of $A$ as long as it is positive definite, as (\ref{eq:monotonicity}) is replaced by
\begin{equation}
    \frac{d A}{d\ln{\mu}}=\partial_{I}A\beta^{I}=\partial_{I}A\lsp T^{IJ}\lsp \partial_{J}A\geq0\,.
\end{equation}
It has been shown by Wallace and Zia that at three-loops and above the inclusion of a non-flat metric is necessary\cite{Wallace:1974dy}, and recent results indicate that the multiscalar beta function is gradient through six-loops provided some conditions are met\cite{Jack:2016tpp,Pannell:2024sia}.

Recently, there has been growing interest in studying the perturbation of bulk models by the introduction of defect operators, describing localized impurities in otherwise critical systems\cite{Cuomo:2021kfm,Giombi:2022vnz,Pannell:2023pwz,Drukker:2022pxk,CarrenoBolla:2023sos,CarrenoBolla:2023vrv,Barrat:2023ivo,Trepanier:2023tvb,Sakkas:2024dvm,Bianchi:2023gkk,Aharony:2023amq,Hu:2023ghk,Harribey:2023xyv,Harribey:2024gjn,Ge:2024hei,Lauria:2020emq}. These defect operators can be characterized by the dimension of their support $p<d$ where $d$ is the dimension of the ambient space. In this paper we will be interested in defects inserted in a dimension $d=4-\varepsilon$ critical bulk scalar model, where the introduction of a defect perturbs the action by a weakly relevant operator and triggers an RG flow to a non-trivial defect CFT (dCFT) fixed point. To reach a given dCFT one must set to zero any relevant deformations that exist at that fixed point, corresponding in a physical system to tuning a number of parameters to arrive at a phase transition in the dCFT's universality class. Totally stable fixed points, which have zero relevant operators, are special in this regard, and can be reached without any tuning at all. It is natural in the study of the space of dCFTs to ask whether or not Michel's theorem continues to hold for these defect models. That is to say, are stable dCFTs unique and classified by the global minimization of some function $A$? As noted in \cite{Pannell:2023pwz} the answer to this question is negative for the line defect; there may exist multiple stable line defect fixed points derived from from the same critical bulk theory. Examples in that paper were explicitly constructed for $N=7$ and $N=9$ scalars, with the bulk taken to lie at the hypertetrahedral fixed point with $S_{N+1}\times \mathbb{Z}_2$ symmetry. 

The meaning of the word stability depends upon the context in which it is used. Very often for both practical and physical purposes one is not interested in considering all of coupling space, and instead wishes to restrict to a small submanifold consisting of deformations preserving a certain group $G$ of field transformations. Crucially, Michel's theorem  continues to hold for all such submanifolds in multiscalar theories. More generic deformations are useful in the study of crossover effects corresponding to the introduction of an anisotropy into critical systems. Many anisotropies in bulk multiscalar models, such as a cubic anisotropy in an $O(N)$ system, are of experimental and general interest. In this paper we will be concerned with deriving results for generic deformations, some of which may also be restricted to apply to stability preserving deformations. For clarity we will use totally stable to refer to conclusions which are only applicable to stability with respect to generic deformations.

We seek to expand knowledge about defect fixed points by 
investigating Michel's theorem for line defects ($p=1$), surface defects ($p=2$), and co-dimension 1 interface defects. This paper is organized in the following manner: In section \ref{michelsec} we review Michel's theorem for multiscalar models in order to apply the method of proof to the various defect models. In section \ref{linesec} we indicate how the proof of Michel's theorem fails for line defect models. We then demonstrate that the beta function can be made gradient by an unusual choice of $A$-function and metric and use this fact to prove that a local version of Michel's theorem survives in these systems. We verify our theorem by re-deriving the analysis of the fixed points in the $O(N)$ and hypercubic models presented in \cite{Pannell:2023pwz}, and then go further to provide a novel classification of fixed points of the hypertetrahedral model. In section \ref{surfacesec} we consider surface defects, proving that stable fixed points globally minimize a suitably defined $A$-function and providing an explicit analysis of stability for a number of bulk models. Finally, in section \ref{interfacesec} we investigate interface defects, proving that for $N\geq 6$ there exist no totally stable fixed points, and arguing that for $N<6$ the only possible totally stable fixed point is the trivial interface.

\section{Michel's Theorem for multiscalar models}\label{michelsec}
For the convenience of the reader, let us first recall how Michel's theorem works in the case of bulk multiscalar models. 
\begin{theorem}
    If $\lambda^*_{ijkl}$ is a stable fixed point of (\ref{eq:betabulk}), then it must correspond to the unique global minimum of the function
    $$A=-\frac{\varepsilon}{2}\lambda_{ijkl}\lambda_{ijkl}+\lambda_{ijkl}\lambda_{klmn}\lambda_{ijmn}$$
    in the space of fixed points.
\end{theorem}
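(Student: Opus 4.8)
The plan is to exploit the gradient structure (\ref{eq:bulkgraddef}) together with the homogeneity of $A$ under rescalings of $\lambda_{ijkl}$. First I would contract the fixed-point equation $\beta^I=0$ with the coupling itself. Writing $A=A_{(2)}+A_{(3)}$ for the quadratic and cubic pieces, which are homogeneous of degrees $2$ and $3$, Euler's theorem gives $\lambda_I\beta^I=\lambda_I\partial^I A=2A_{(2)}+3A_{(3)}$. At a fixed point the left-hand side vanishes, so $A_{(3)}=-\tfrac23 A_{(2)}$ and hence
\begin{equation}
A\big|_{\text{f.p.}}=A_{(2)}+A_{(3)}=\tfrac13 A_{(2)}=-\frac{\varepsilon}{6}\,\lambda_{ijkl}\lambda_{ijkl}\,.
\end{equation}
Thus on the space of fixed points $A$ is a strictly decreasing function of the norm $|\lambda|^2=\lambda_{ijkl}\lambda_{ijkl}$, and minimizing $A$ is equivalent to finding the fixed point of largest norm. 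The problem therefore reduces to showing that a stable fixed point has strictly greater norm than any other fixed point.

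Next I would encode the cubic interaction through the totally symmetric trilinear form $B_{IJK}=\partial_I\partial_J\partial_K A$, so that the fixed-point equation reads $\varepsilon\lambda_I=\tfrac12 B_{IJK}\lambda^J\lambda^K$ and the stability matrix is $S_{IJ}=\partial_I\partial_J A=-\varepsilon\delta_{IJ}+B_{IJK}\lambda^K$. Stability of a fixed point $\lambda^{(1)}$ means precisely that $S_{IJ}(\lambda^{(1)})$ is positive definite. Given a second fixed point $\lambda^{(2)}$, the key step is to evaluate the quadratic form $S_{IJ}(\lambda^{(1)})\,w^I w^J$ on the difference $w^I=\lambda^{(1)I}-\lambda^{(2)I}$. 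Using the two fixed-point equations to rewrite every contraction of $B_{IJK}$ with the $\lambda^{(a)}$ in terms of the inner products $\lambda^{(a)}\cdot\lambda^{(b)}$, I expect all cubic terms to collapse, leaving the clean identity
\begin{equation}
S_{IJ}(\lambda^{(1)})\,w^I w^J=\varepsilon\left(|\lambda^{(1)}|^2-|\lambda^{(2)}|^2\right)\,.
\end{equation}

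The conclusion then follows immediately. Since $\lambda^{(1)}$ is stable, the left-hand side is strictly positive for $w\neq 0$, forcing $|\lambda^{(1)}|^2>|\lambda^{(2)}|^2$ and hence $A(\lambda^{(1)})<A(\lambda^{(2)})$ for every distinct fixed point $\lambda^{(2)}$. This simultaneously establishes that a stable fixed point globally minimizes $A$ over the space of fixed points and that it is unique: two distinct stable fixed points would each have to strictly dominate the other in norm, a contradiction.

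The main obstacle is the middle step. One must combine both fixed-point relations, contracting the first equation with $\lambda^{(2)}$, the second with $\lambda^{(1)}$, and each with itself, and then use the total symmetry of $B_{IJK}$ to verify that the three cubic contractions arising in the expansion of $B_{IJK}\lambda^{(1)K}w^I w^J$ reorganize into $2\varepsilon\,\lambda^{(1)}\cdot w$. After that, the purely quadratic algebra reduces $-\varepsilon|w|^2+2\varepsilon\,\lambda^{(1)}\cdot w$ to $\varepsilon(|\lambda^{(1)}|^2-|\lambda^{(2)}|^2)$. Everything else is bookkeeping, and notably the argument uses nothing about the precise combinatorial form of the quartic beta function beyond its being gradient and the positivity of $\varepsilon$, which is what makes it a promising template to stress-test against the defect cases in the later sections.
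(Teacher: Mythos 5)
Your proposal is correct --- I verified the central identity. With $B_{IJK}=\partial_I\partial_J\partial_KA$, the two fixed-point equations give $B(\lambda_a,\lambda_a,v)=2\varepsilon\,(\lambda_a\cdot v)$ for $a=1,2$ and any $v$, so with $w=\lambda^{(1)}-\lambda^{(2)}$,
\begin{equation}
S_{IJ}(\lambda^{(1)})\,w^Iw^J=-\varepsilon\,|w|^2+B(w,w,\lambda^{(1)})
=-\varepsilon\bigl(a_1-2b+a_2\bigr)+2\varepsilon\bigl(a_1-b\bigr)=\varepsilon\bigl(a_1-a_2\bigr)\,,
\end{equation}
in the notation $a_1=(\lambda_1,\lambda_1)$, $a_2=(\lambda_2,\lambda_2)$, $b=(\lambda_1,\lambda_2)$ of the paper, and the Euler-theorem step reproducing $A(\lambda^*)=-\tfrac{\varepsilon}{6}|\lambda^*|^2$ is also right. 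Your route is a genuine streamlining of the paper's argument rather than a departure from it: the paper restricts to the $\lambda_1$--$\lambda_2$ plane, computes the full $2\times2$ restricted Hessian $\bigl(\begin{smallmatrix}a_1 & b\\ b & 2b-a_2\end{smallmatrix}\bigr)$, and analyzes the signs of both eigenvalues, whereas your identity is precisely that quadratic form evaluated on the single direction $(1,-1)$, which suffices since stability demands nonnegativity in every direction. What the eigenvalue analysis buys that your single-direction test does not is the marginal edge case: if stability is taken to mean only that $S\succeq 0$ (no strictly relevant directions), your identity yields $a_1\geq a_2$ rather than $a_1>a_2$, while the paper shows that $a_1=a_2$ forces a strictly negative eigenvalue unless $a_1=b$, which collapses to $\lambda^{(1)}=\lambda^{(2)}$. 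Your argument closes this gap with one extra contraction: positive semidefiniteness plus $S[w,w]=0$ forces $Sw=0$, and then $0=S[w,\lambda^{(1)}]=\varepsilon(a_1-b)$ gives $|w|^2=2a_1-2b=0$, a contradiction. One caveat on your closing remark: the argument does use more than gradiency and $\varepsilon>0$ --- it needs the quadratic part of $A$ to be a negative multiple of a positive-definite form and the interaction part to be \emph{cubic}, so that every contraction of $B$ with the fixed points carries two copies of a single $\lambda^{(a)}$ and is reducible by the fixed-point equation. This is exactly what fails for line defects, where the quartic term produces the irreducible invariant $(h_1,h_1,h_2,h_2)_\lambda$ of indefinite sign, and what is weakened for surface defects, where the role of the norm is played by $(\cdot,\cdot)_\lambda$, whose positive definiteness must be assumed separately.
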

\begin{proof}We follow the proof by Rychkov and Stergiou\cite{Rychkov:2018vya}. As noted by Wallace and Zia\cite{Wallace:1974dx,Wallace:1974dy}, the one-loop beta function (\ref{eq:betabulk}) can be written as the total derivative of the so-called $A$-function,
\begin{equation}
    \beta_{ijkl}=\frac{\partial A}{\partial \lambda_{ijkl}}\,,\qquad A=-\frac{\varepsilon}{2}\lambda_{ijkl}\lambda_{ijkl}+\lambda_{ijkl}\lambda_{klmn}\lambda_{ijmn}\,.
    \label{eq:scalargrad}
\end{equation}
One can easily see that this $A$ will be a Liapunov function\cite{liapunov2016stability}, monotonically decreasing along RG flows towards the IR, as
\begin{equation}
    \frac{d A}{d \ln{\mu}}=\frac{\partial A}{\partial \lambda_{ijkl}}\frac{d \lambda_{ijkl}}{d\ln{\mu}}=\beta_{ijkl}\beta_{ijkl}\geq 0\,
\end{equation}
with $dA/d \ln{\mu}=0$ only at fixed points satisfying $\beta_{ijkl}=0$. At the fixed point, we have that
\begin{equation}
    \varepsilon\lambda^*_{ijkl}=\lambda^*_{ijmn}\lambda^*_{mnkl}+\text{Perms.}\,,
    \label{eq:fixedpointsolbulk}
\end{equation}
so that the critical values of $A$ will be given by
\begin{equation}
    A(\lambda^*)=-\frac{\varepsilon}{6}\lambda^*_{ijkl}\lambda^*_{ijkl}\leq 0\,.
    \label{eq:Afixedpointbulk}
\end{equation}
Linearizing the dynamical system $d\lambda_{ijkl}/d\ln{\mu}=\beta_{ijkl}$ about a fixed point, one sees that the stability of fixed points is determined by the eigenvalues of the stability matrix
\begin{equation}
    S_{ijkl,mnop}=\frac{\partial \beta_{ijkl}}{\partial \lambda_{mnop}}\,,
\end{equation}
which from (\ref{eq:scalargrad}) is equal to the Hessian of $A$. Suppose that we have two distinct fixed points, $\lambda^1_{ijkl}\neq\lambda^2_{ijkl}$, and let us restrict ourselves to the $\lambda^1-\lambda^2$ plane in coupling space. We can choose coordinates $(s,t)$ such that $\lambda^1$ lies at the origin like
\begin{equation}
    \lambda_{ijkl}=(1+s)\lambda^1_{ijkl}+t\lambda^2_{ijkl}\,.
\end{equation}
For simplicity, let us introduce the notation of \cite{Rychkov:2018vya} and write
\begin{equation}
    \lambda_{ijkl}\lambda_{ijkl}=(\lambda,\lambda)\,,\qquad \lambda_{ijkl}\lambda_{klmn}\lambda_{ijmn}=(\lambda,\lambda,\lambda)\,.
\end{equation}
Then, $A(\lambda)$ in the $(s,t)$ coordinates is given by
\begin{equation}
    A(\lambda)=(\lambda_1,\lambda_1)\left(-\frac{1}{6}+\frac{s^2}{2}+\frac{s^3}{3}\right)+(\lambda_2,\lambda_2)\left(-\frac{t^2}{2}+\frac{t^3}{3}\right)+(\lambda_1,\lambda_2)(1+s)(s+t)t\,,
\end{equation}
where we have used (\ref{eq:fixedpointsolbulk}) to write everything in terms of $(\lambda,\lambda)$, and have set $\varepsilon=1$ to remove an overall factor for notational simplicity. The Hessian of $A$ about $\lambda_1$ is thus given by the matrix
\begin{equation}
    S=\begin{pmatrix}
        (\lambda_1,\lambda_1) && (\lambda_1,\lambda_2) \\
        (\lambda_1,\lambda_2) && 2(\lambda_1,\lambda_2)-(\lambda_2,\lambda_2)
    \end{pmatrix}\,,
\end{equation}
which has eigenvalues
\begin{equation}
    \frac{1}{2}\left(a_1+2b-a_2\pm \sqrt{(a_1-2b)^2+(a_2-2b)^2+2a_1 a_2}\right)\,,
    \label{eq:bulkeigs}
\end{equation}
where we have defined
\begin{equation}
    a_1=(\lambda_1,\lambda_1)\,, \quad a_2=(\lambda_2,\lambda_2)\,, \quad b=(\lambda_1,\lambda_2)\,.
\end{equation}
As $a_1,a_2\geq0$, one sees that taking the plus sign in (\ref{eq:bulkeigs}) will always lead to a positive eigenvalue,
\begin{equation}
    a_1+2b-a_2+\sqrt{(a_1-2b)^2+(a_2-2b)^2+2a_1 a_2}\geq a_1+2b-a_2+|2b-a_2|\geq0\,,
\end{equation}
so that the stability of $\lambda_1$ is determined by the sign of the second eigenvalue. Reorganizing the terms in the square root, one sees that this is
\begin{equation}
    a_1+2b-a_2-\sqrt{( a_1+2b-a_2)^2+4(a_1-b)^2+4a(a_2-a_1)}\,.
\end{equation}
If $a_2>a_1$, this is manifestly negative. If $a_2=a_1$ this will also be negative unless 
$a_1=b$, but this leads to the contradiction
\begin{equation}
    (\lambda_1-\lambda_2,\lambda_1-\lambda_2)=a_1+a_2-2b=0\,,
\end{equation}
which violates the assumption that $\lambda^1_{ijkl}\neq \lambda^2_{ijkl}$. Thus, if $a_2\geq a_1$, $\lambda_1$ will be unstable with respect to perturbations towards $\lambda_2$. Note that the minus sign in (\ref{eq:Afixedpointbulk}) translates this condition to $A(\lambda_2)\leq A(\lambda_1)$. The theorem then follows.
\end{proof}

\section{Stability in scalar line defects}\label{linesec}
Let us first consider the case of a dimension $p=1$ line defect. The effect of this defect can be represented via the addition of a deformation to the action localized on the defect
\begin{equation}
    S=S_{\text{bulk}}+S_{\text{defect}}=\int d^dx\left(\frac{1}{2}\partial_\mu \phi_i\partial^\mu\phi_i+\frac{\lambda_{ijkl}}{4!}\phi_i\phi_j\phi_k\phi_l\right)+\int_{-\infty}^{\infty} d\tau h_i\phi_i(\tau,\Vec{0})\,,
\end{equation}
where $\tau$ is a proper time along the defect and where we take the bulk interaction $\lambda_{ijkl}$ to lie at a fixed point of (\ref{eq:betabulk}). In $d=4-\varepsilon$ the couplings $h_i$ will be weakly relevant with a classical dimension of $\Delta_h=\varepsilon/2$, so that $S_{\text{defect}}$ will trigger an RG flow in the space of defect couplings. To one loop, the beta function for this RG flow has been calculated in \cite{Cuomo:2021kfm,Giombi:2022vnz,Pannell:2023pwz} and is given by
\begin{equation}
    \beta_i=-\frac{\varepsilon}{2}h_i+\frac{1}{6}\lambda_{ijkl}h_j h_k h_l\,.
    \label{eq:betadefect}
\end{equation}

Now, let us try to follow the logic of Michel's theorem in the case of line defects, where we will find that there will be an obstruction preventing us from reaching the same conclusion. Examining (\ref{eq:betadefect}), one sees that we can again write the beta function as the derivative of an $A$-function, where now
\begin{equation}
    A(h)=-\frac{\varepsilon}{4}h_ih_i+\frac{1}{24}\lambda_{ijkl}h_ih_jh_kh_l\,.
\end{equation}
While strictly speaking vector indices $i$ are to be raised and lowered using the flat metric $\delta^{ij}$, we will follow the convention of always keeping lowercase Latin indices lowered. Fixed points $h^*$ will satisfy the equation
\begin{equation}
    3h^{*}_i=\lambda_{ijkl}h^*_j h^*_k h^*_l\,,
\end{equation}
so that at the fixed point $A$ will take the value
\begin{equation}
    A(h^*)=-\frac{\varepsilon}{8}h^*_ih^{*}_i\leq0\,.
\end{equation}
Again let us consider two non-identical fixed points $h^1_i\neq h^2_i$ and restrict ourselves to the $h^1-h^2$ plane in coupling space. Parameterizing this plane by
\begin{equation}
    h_i=(1+s)h_i^1+th^2_i\,
\end{equation}
the $A$-function becomes (again setting $\varepsilon=1$)
\begin{equation}
\begin{split}
    A(h)=&(h_1,h_1)\left(-\frac{1}{8} + \frac{s^2}{2} + \frac{s^3}{2} + \frac{s^4}{8}\right)+(h_2,h_2)\left(-\frac{t^2}{4}+\frac{t^4}{8}\right)\\ &+(h_1,h_2)(1 + s)(2 s + s^2 + t^2)\frac{t}{2}+(h_1,h_1,h_2,h_2)_\lambda\frac{(1+s)^2t^2}{4} \,,
\end{split}
\end{equation}
where we have introduced the notation
\begin{equation}
    (h_1,h_1,h_2,h_2)_\lambda=h^1_ih^1_j\lambda_{ijkl}h^2_kh^2_l\,.
\end{equation}
Crucially, $(h_1,h_1,h_2,h_2)_\lambda$ has no definite sign, which will prove to be an obstruction in the proof. The Hessian of $A$ now becomes
\begin{equation}
    S=\begin{pmatrix}
        (h_1,h_1) && (h_1,h_2) \\
        (h_1,h_2) && \frac{1}{2}(h_1,h_1,h_2,h_2)_\lambda-\frac{1}{2}(h_2,h_2)
    \end{pmatrix}\,,
    \label{eq:hessiandefect}
\end{equation}
with eigenvalues
\begin{equation}
    \frac{1}{2}\left(a+c\pm\sqrt{a^2+4b^2-2ac+c^2}\right)\,,
\end{equation}
where $a=(h_1,h_1)$, $b=(h_1,h_2)$ and $c=(h_1,h_1,h_2,h_2)_\lambda/2-(h_2,h_2)/2$. As before, one can see that one of these eigenvalues is necessarily positive, as choosing a plus sign yields
\begin{equation}
    a+c\pm\sqrt{a^2+4b^2-2ac+c^2}\geq a+c+|a-c|\geq0\,,
\end{equation}
where the last equality follows as $(h_1,h_1)\geq0$. The other eigenvalue takes the form
\begin{equation}
    a+c-\sqrt{a^2+4b^2-2ac+c^2}=a+c-\sqrt{(a+c)^2+4b^2-4ac}\,.
\end{equation}
If $4b^2-4ac\geq0$ then this will indeed be negative and $\lambda_1$ will be unstable, but this condition relies on knowledge about the tensor $\lambda_{ijkl}$, whose precise form is unknown. Indeed, as may be expected, this condition is violated in the examples with multiple stable fixed points. For the $N=7$ hypertetrahedral model, for instance, one finds that
\begin{equation}
    4b^2-4ac=-\frac{6075}{128}
\end{equation}
for the two stable fixed points, indicating that there exists some sort of barrier preventing an RG flow from connecting the two points. The nature of this barrier we will now make more precise.

\subsection{A local criterion of stability}\label{stab1sec}
The obstruction to Michel's theorem in the case of line defects arose from the inclusion in $A$ of the term
\begin{equation}
    A(h)\supset\frac{1}{6}\lambda_{ijkl}h_ih_jh_kh_l\,.
\end{equation}
Considering (\ref{eq:gradflowdef}), we notice that we can remove this term from $A$ while retaining its important monotonicity properties by introducing corrections to the flat metric. Specifically, we choose to absorb the $\lambda h^3$ term in the beta function entirely within the metric as
\begin{equation}
    T_{ij}=\delta_{ij}-\frac{1}{3\varepsilon}\lambda_{ijkl}h_kh_l\,, \qquad A(h)=-\frac{\varepsilon}{4}h_ih_i\,.
    \label{eq:Tdef}
\end{equation}
Naively, one might assume that this choice of $A$ completely obstructs Michel's theorem, as the Hessian (\ref{eq:hessiandefect}) becomes
\begin{equation}
    S=\begin{pmatrix}
        -\frac{1}{2}(h_1,h_1) && -\frac{1}{2}(h_1,h_2) \\
        -\frac{1}{2}(h_1,h_2) && -\frac{1}{2}(h_2,h_2)
    \end{pmatrix}\,,
\end{equation}
which has a negative determinant by the Cauchy-Schwarz inequality. However, the inclusion of a non-trivial metric means that the stability matrix of the system is now related to the Hessian of $A$ by
\begin{equation}
    S_{ij}=T_{ik}\partial_{k}\partial_j A\,,
\end{equation}
so that negative eigenvalues of the Hessian no longer guarantee negative eigenvalues of $S^i_j$.

There is a salient feature of this solution which demands note, namely the explicit non-perturbative factor of $1/\varepsilon$ which appears in $T_{ij}$. As we have fixed the bulk to lie at a critical point, one must remember that $\lambda_{ijkl}$ will be $O(\varepsilon)$, so that the pole in $\varepsilon$ will cancel and both terms in $T_{ij}$ will be $O(\varepsilon^0)$. This cancellation means that the one loop correction to the flat metric will not be suppressed, allowing it to compete and, depending upon the form of the interaction tensor, potentially violate the assumption of positive definiteness. Usually, gradient flow is been explored with an eye towards generalizing Zamolodchikov's $c$-theorem\cite{Zamolodchikov:1986gt} to provide of RG-monotonicity theorems for a variety of physical theories. For instance Cardy's $A$-theorem\cite{Cardy:1988cwa} was proven perturbatively by Jack and Osborn who proved the existence of a solution to (\ref{eq:gradflowdef}) in $d=4$\cite{Jack:1990eb}. In that context the function $A$ is expected to encode physically meaningful information about the states in the QFT, being related to the the $a$-anomaly of the trace of the stress tensor. Here, $A=-\varepsilon h_ih_i/4$ has no such physical interpretation, and is purely a mathematical artifice that will prove to be useful when examining the stability of fixed points. As we are only interested in setups containing non-zero fixed points, we will not seek to take any $\varepsilon\rightarrow 0$ limit and factors of $1/\varepsilon$ will remain finite.

For small $h$ $T_{ij}$ will approximate the flat metric and thus satisfy positive-definite condition, so that $A$ must monotonically decrease along RG flows close to the origin. This can be seen explicitly by considering the evolution of $A$ under RG flow:
\begin{equation}
    \frac{d A}{d\ln{\mu}}=\partial_i A T_{ij} \partial_j A=\frac{\varepsilon^2}{4}(r^2-\frac{1}{3\varepsilon}\lambda_{ijkl}h_ih_jh_kh_l)\,,
    \label{eq:Aflow}
\end{equation}
where here $r^2=h_ih_i$ is the radius in coupling space. Coupling space is thus divided into regions: a region $\mathcal{M}$ around the origin within which $d A/d\ln{\mu}>0$ and a number of regions where $d A/d\ln{\mu}<0$, the number of which depends upon the form of $\lambda_{ijkl}$. These regions are separated by the surfaces defined by the equation $h_iT_{ij}h_j=0$, along which $d A/d\ln{\mu}=0$. These surfaces can equivalently be defined as those along which $\beta_i$ has no radial component in spherical coordinates. Crucially, $h^*_i\beta_i(h^*)=0$ at fixed points, so all non-trivial fixed points must lie on these surfaces. It is the definition of these surfaces that will allow us to give a sense of 'locality' to Michel's theorem. The main result for line defects is then the following theorem:
\begin{theorem}
    Non-trivial fixed points of (\ref{eq:betadefect}) locally extremize $A=-\varepsilon h_ih_i/4$ on the surface(s) defined by $h_i T_{ij}h_j=0$, with local minima corresponding to stable fixed points.
\end{theorem}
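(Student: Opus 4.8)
The plan is to treat the statement as a constrained optimization problem: I want to show that the nontrivial fixed points are exactly the critical points of $A=-\varepsilon h_ih_i/4$ restricted to the surface $g(h):=h_iT_{ij}h_j=0$, and then match the second-order (minimum) condition of that constrained problem to the stability matrix $S_{ij}=\partial_j\beta_i$. Writing $g(h)=r^2-\tfrac{1}{3\varepsilon}\lambda_{ijkl}h_ih_jh_kh_l$, the identity $h_i\beta_i=-\tfrac{\varepsilon}{2}g$ already recorded in the text shows that every fixed point lies on $g=0$, and that $g=0$ is a regular surface away from the origin since $\partial_i g=2h_i-\tfrac{4}{3\varepsilon}\lambda_{ijkl}h_jh_kh_l$ does not vanish there. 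The two halves of the theorem — extremization and the minimum/stability correspondence — then follow from a pair of short gradient and Hessian computations.

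For the extremization claim I would apply the method of Lagrange multipliers. Evaluating the two gradients at a fixed point and using $\lambda_{ijkl}h^*_jh^*_kh^*_l=3\varepsilon h^*_i$ (which is just $\beta_i(h^*)=0$), I find $\partial_i A|_{h^*}=-\tfrac{\varepsilon}{2}h^*_i$ and $\partial_i g|_{h^*}=-2h^*_i$, so that $\partial_iA=\tfrac{\varepsilon}{4}\,\partial_i g$ with constant multiplier $\mu=\varepsilon/4$. Hence every nontrivial fixed point is a critical point of $A$ on the surface. The converse is equally clean and worth recording: a constrained critical point forces $\lambda_{ijkl}h_jh_kh_l\propto h_i$, and imposing $g=0$ pins the proportionality constant to be exactly $3\varepsilon$, which is $\beta_i=0$. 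Thus the critical points of $A$ on $g=0$ are precisely the nontrivial fixed points.

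For the minimum/stability correspondence I would compute the stability matrix directly from (\ref{eq:betadefect}), obtaining $S_{il}=-\tfrac{\varepsilon}{2}\delta_{il}+\tfrac12\lambda_{ilkm}h_kh_m$, which is symmetric. The key structural observation is that the normal $\partial_i g|_{h^*}\propto h^*_i$ is radial, so the tangent space of the surface at $h^*$ is the orthogonal complement $(h^*)^\perp$; and $h^*$ is itself an eigenvector of $S$, with $S_{il}h^*_l=\varepsilon h^*_i$. This universal radial eigenvalue $+\varepsilon>0$ is always stabilizing, so $(h^*)^\perp$ is $S$-invariant and carries all the remaining eigenvalues, and the fixed point is stable iff $S$ is positive definite on $(h^*)^\perp$. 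On the other side, the second-order condition for a constrained local minimum is positivity of the projected Hessian $H_{il}=\partial_i\partial_l A-\mu\,\partial_i\partial_l g$ on the same tangent space; carrying out this computation gives $H_{il}=-\varepsilon\delta_{il}+\lambda_{ilkm}h_kh_m=2S_{il}$. Since the constrained tangent space and the $S$-invariant complement of the radial mode coincide, $H|_{(h^*)^\perp}\succ0\iff S|_{(h^*)^\perp}\succ0$, which together with the automatically positive radial eigenvalue is exactly the statement that $h^*$ is stable. Local minima of $A$ on the surface therefore correspond precisely to stable fixed points.

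The computations themselves are routine; the crux — and the step I expect to require the most care — is the bookkeeping that makes the radial mode decouple cleanly on both sides at once. Everything hinges on three coincidences: (i) the normal $\partial_i g|_{h^*}$ points radially, so the constraint's tangent space is exactly $(h^*)^\perp$; (ii) $h^*$ is an eigenvector of the symmetric $S$ with the $h$-independent eigenvalue $+\varepsilon$; and (iii) the projected constrained Hessian is proportional to $S$ on the nose, $H=2S$, so the overall positive factor is irrelevant. I would double-check the multiplier $\mu=\varepsilon/4$ and the $-\tfrac{1}{3\varepsilon}$ normalization in $T_{ij}$ of (\ref{eq:Tdef}), since these are precisely what conspire to give $H=2S$; a different normalization of the metric would spoil the proportionality. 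Finally I would stress that only a \emph{local} statement is available, because $T_{ij}$ is not globally positive definite — the surface $g=0$ generically splits into several disconnected sheets, each able to host its own local minimum, which is exactly how multiple stable fixed points evade the global uniqueness of Michel's theorem.
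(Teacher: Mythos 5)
Your proof is correct, but it follows a genuinely different route from the paper's. The paper argues dynamically: it passes to spherical coordinates, solves $h_iT_{ij}h_j=0$ for the boundary $r(\vec{\Omega})=\sqrt{3\varepsilon/\lambda_{rrrr}}$, and shows that on this surface the beta function is purely angular with $\beta^\alpha=-\frac{\varepsilon}{4}\nabla^\alpha r(\vec{\Omega})$, so fixed points extremize $r(\vec{\Omega})$; stability is then inferred from the qualitative flow picture (inside $\mathcal{M}$ the radius grows toward the IR, the boundary is a one-way membrane, so IR flows pool at local maxima of $r$), supplemented by the same radial eigenvalue $S_{ij}h^*_j=\varepsilon h^*_i$ that you use. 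You instead treat the problem statically as constrained optimization: the Lagrange condition with $\mu=\varepsilon/4$ reproduces $\beta_i(h^*)=0$ (your converse, with $\mu=0$ forcing $h=0$, is a nice strengthening the paper obtains only implicitly through $\beta^\alpha\propto\nabla^\alpha r$), and the exact identity $H_{il}=\partial_i\partial_l\bigl(A-\tfrac{\varepsilon}{4}g\bigr)=-\varepsilon\delta_{il}+\lambda_{ilkm}h_kh_m=2S_{il}$, together with the observation that the constraint normal $\partial_i g|_{h^*}=-2h^*_i$ is radial so the tangent space is the $S$-invariant subspace $(h^*)^\perp$, converts the second-order minimum condition directly into positivity of $S$ on the non-radial modes. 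I checked these computations ($\partial_i g|_{h^*}=-2h^*_i$ via $\lambda_{ijkl}h^*_jh^*_kh^*_l=3\varepsilon h^*_i$, the multiplier, and $H=2S$) and they are right; the regularity of the surface you assert follows from $h_i\partial_i g=-2r^2\neq 0$ on $g=0$ away from the origin. What each approach buys: yours is sharper at the linearized level, since the paper's ``the boundary is repulsive, so flows cannot re-approach the fixed point from outside'' step is a heuristic about the global flow whereas $H=2S$ is an identity; the paper's version, on the other hand, supplies the global structure (the region $\mathcal{M}$, one-way crossings of $\partial\mathcal{M}$, flows accumulating at maxima of $r(\vec{\Omega})$) that it later exploits when analyzing the $O(N)$, hypercubic and hypertetrahedral examples.

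Two small remarks. First, both proofs are inconclusive in the degenerate case where $S$ (equivalently $H$) has a zero eigenvalue on $(h^*)^\perp$ --- cf.\ the paper's footnote on $3m=N+1$ in the hypertetrahedral model --- since there neither the second-order test nor the linearization decides, and the identity $H=2S$ controls only the quadratic order along the surface; your ``iff'' should therefore be stated for nondegenerate critical points. Second, your concern about the normalization of $T_{ij}$ is partly misplaced: rescaling $g\to cg$ rescales $\mu\to\mu/c$ and leaves the projected Lagrangian Hessian invariant, so only a genuinely different surface (a changed relative weight of the quadratic and quartic terms in $g$, which would no longer contain the fixed points) could spoil $H\propto S$.
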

\begin{proof}
For ease of visualization let us consider the equivalent problem of extremizing the radius $r^2$, the maximization of which corresponds to the minimization of $A$. Flows within $\mathcal{M}$ containing the trivial fixed point will move to outwards towards the boundary (or possibly off to infinity if the surfaces defined by $h_i T_{ij}h_j=0$ are non-compact), and we must concern ourselves with the behavior of the flows close to the boundary. To make the definition of the boundary precise, let us first change to spherical coordinates $(r,\Vec{\Omega})$, and solve (\ref{eq:Aflow}). This equation becomes
\begin{equation}
    \frac{\varepsilon}{4}r^2(1-\frac{1}{3\varepsilon}\lambda_{ijkl}\hat{h}_i\hat{h}_j\hat{h}_k\hat{h}_l r^2)
\end{equation}
where the $\hat{h}_i$ are vectors on the the unit sphere $S^{N-1}$ and depend only on the angular coordinates $\Vec{\Omega}$. This can be solved to give the boundary $r(\Vec{\Omega})$ as
\begin{equation}
    r(\Vec{\Omega})=\sqrt{\frac{3\varepsilon}{\lambda_{rrrr}}}\,,
    \label{eq:rsol}
\end{equation}
where $\lambda_{rrrr}=\lambda_{ijkl}\hat{h}_i\hat{h}_j\hat{h}_k\hat{h}_l$ is the purely radial component of the bulk interaction tensor. The components of the beta function in these new coordinates are best expressed using $T^{ij}$ as
\begin{equation}
    \beta_a=-\frac{\varepsilon}{2}T_{ar}r.
\end{equation}
From (\ref{eq:Tdef}) one sees that $\beta_r$ simply reproduces (\ref{eq:Aflow}) and will thus vanish along the boundary as expected. On the boundary the angular components of the beta function will be
\begin{equation}
    \beta^\alpha=\frac{1}{6}\frac{g^{\alpha\beta}}{r^2}\lambda_{\beta rrr}r^3=\frac{g^{\alpha\beta}\lambda_{\beta rrr}}{2\lambda_{rrrr} }\,,
\end{equation}
where $g^{\alpha\beta}$ is the inverse metric on the unit sphere $S^{N-1}$. Here we briefly break with our index convention by raising the index on $\beta$ to remind the reader that it transforms as a vector. To simplify this expression, we notice that
\begin{equation}
    \lambda_{\beta rrr}=r\frac{\partial\hat{h}_i}{\partial\Omega^\beta}\hat{h}_j\hat{h}_k\hat{h}_l\lambda_{ijkl}=r\frac{1}{4}\partial_\beta(\lambda_{rrrr})=-\frac{\lambda_{rrrr}}{2}\partial_\beta r(\vec{\Omega})\,.
\end{equation}
Thus, along the boundary we have
\begin{equation}
    \beta^{\alpha}=-\frac{\varepsilon}{4}\nabla^\alpha r(\Vec{\Omega})\,,
    \label{eq:betaboundary}
\end{equation}
where $\nabla^\alpha r$ is the gradient of $r$ taken as a function on $S^{N-1}$. Note that in the last line we have absorbed a factor of $r$ by rescaling the angular vectors to live on the unit sphere. One sees immediately that fixed points $\beta^\alpha=0$ correspond to points extremizing $r(\Omega)$.

The crucial feature of this expression is that the gradient of a function points in the direction of steepest ascent. As flows towards the IR are governed by $-\beta_i$ rather than $\beta_i$, this means that the vector field along the boundary will point towards points on the surface maximizing $r(\Vec{\Omega})$, but without any radial component. These vectors thus point inward towards $\mathcal{M}$ rather than tangent to the surface, so that flows can only move from outside $\mathcal{M}$ to inside, and not vice versa. One fact remains necessary to prove the theorem. As one can easily check, at all non-trivial fixed points $h_i^*$ will be an eigenvector of the stability matrix
\begin{equation}
    S_{ij}=\partial_j\beta_i=-\frac{\varepsilon}{2}\delta_{ij}+\frac{1}{2}\lambda_{ijkl}h_kh_l
\end{equation}
with eigenvalue $\varepsilon$, so that all fixed points are stable against radial perturbations. We thus get the following picture: At fixed points corresponding to local minima or saddle points of $r(\vec{\Omega})$, at least one of the non-radial perturbations will lie inside of $\mathcal{M}$, and will thus trigger a flow which increases in radius. As the boundary is repulsive to internal flows, these flows cannot re-approach the original fixed point by crossing into the external region, and thus correspond to unstable directions. At fixed points corresponding to local maxima, all of the non-radial perturbations will lie in the external region, and thus trigger flows which decrease in radius. However, by (\ref{eq:betaboundary}) and the continuity of the beta function, these flows must point back towards the fixed point and thus correspond to stable directions. One sees that stable fixed points are in one-to-one correspondence with points on the surface(s) locally maximizing $r(\Vec{\Omega})$ or equivalently locally minimizing $A$, and the theorem is proven.
\end{proof}

It is important to notice that this theorem is also applicable to symmetry preserving flows. Restriction to the hyperplane corresponding to deformations preserving a particular symmetry group $G$ is equivalent to considering the gradient to be restricted to the symmetry preserving hypersphere inside $S^{N-1}$. Fixed points which are stable under symmetry preserving deformations must minimize the radius only with respect to these tangent directions, though of course fixed points which are totally stable must continue to be stable.

This theorem is inherently perturbative, and relied in part on the fact that there was only a single term at one loop in the quantum beta function. Nevertheless, it is expected that the stability properties of a fixed point at one loop will hold to all loop orders, as the higher loop corrections to the eigenvalues of the stability matrix will be suppressed by powers of $\varepsilon$. For small enough $\varepsilon$, higher order terms will not alter the behavior of $d A/d\ln{\mu}$ in the vicinity of $r(\Vec{\Omega})$ beyond small perturbations in (\ref{eq:rsol}). That is to say the behavior of the flows in the vicinity of the surface should be qualitatively unchanged by the inclusion of higher loop orders in $\beta_i$, with $A$ monotonically decreasing inside the surface and monotonically increasing just outside. Thus, stability or instability informed by the extremization of the one loop $r(\vec{\Omega})$ as in the theorem will be sufficient to guarantee stability or instability perturbatively.

To help visualize this theorem in action, and to demonstrate its power to correctly account for multiple stable fixed points, let us turn now to some examples.

\subsection{\texorpdfstring{$O(N)$}{O(N)} model}
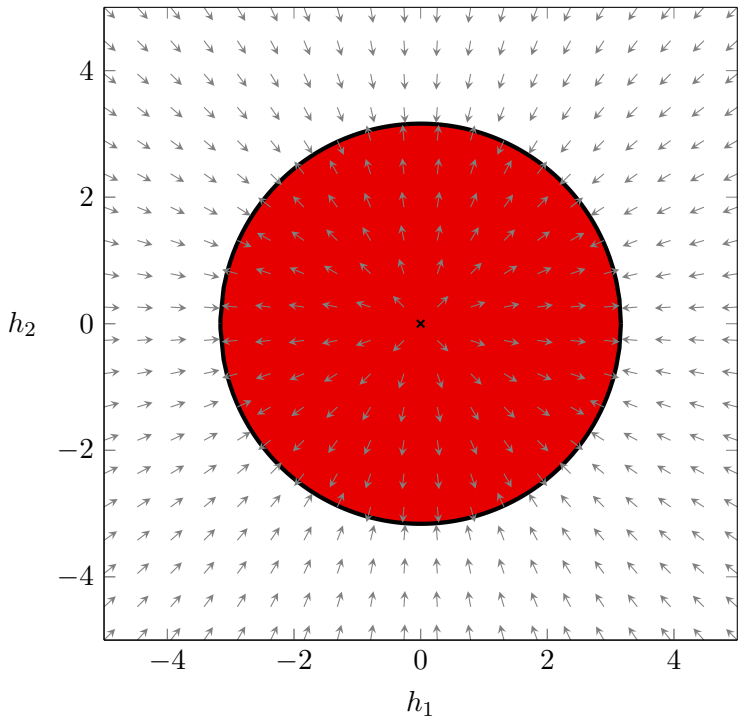
\begin{figure}[t!]
    \centering
\begin{tikzpicture}
\begin{axis}[
    xmin=-5, xmax=5,
    ymin=-5, ymax=5,
    xlabel= $h_1$,
    ylabel= $h_2$,
    ylabel style={rotate=-90},
    xticklabel style={scaled ticks=false, /pgf/number format/fixed, /pgf/number format/precision=3},
    yticklabel style={scaled ticks=false, /pgf/number format/fixed, /pgf/number format/precision=3},
    title={Solution validity for a line defect in an $O(2)$ bulk},
    legend pos = outer north east,
    view={0}{90},
    domain=-5:5
]
\draw[fill=red!90!black,opacity=0.5] (axis cs:0,0) circle [black,radius=sqrt(10)];
\addplot+[only marks, mark=x,mark options={color=black,fill=black}, thick] coordinates{(0,0)};
\addplot+[
    no marks,color=black,ultra thick]
table{datafiles/O2minus.dat};
\addplot+[
    no marks,color=black,ultra thick]
table{datafiles/O2plus.dat};
\addplot3 [
        gray,-stealth,samples=20,
        quiver={
            u={-(-(x/2) + x*(x^2 + y^2)/20)/sqrt((-(x/2) + x*(x^2 + y^2)/20)^2+(-(y/2) + y*(x^2 + y^2)/20)^2)},
            v={-(-(y/2) + y*(x^2 + y^2)/20)/sqrt((-(x/2) + x*(x^2 + y^2)/20)^2+(-(y/2) + y*(x^2 + y^2)/20)^2)},
            scale arrows=0.25,
        },
    ] (x,y,0);
\end{axis}
\end{tikzpicture}
    \caption{Perturbative defect RG flows within an $O(2)$ bulk. The red region indicates the region $\mathcal{M}$. There is a single fixed point modulo the action of $O(2)$, which lives along the circle defined by $r^2=N+8$.}
    \label{fig:O2metric}
\end{figure}
The simplest case one can consider is that of a scalar line defect placed inside of an $O(N)$ critical bulk, where the interaction tensor takes the form
\begin{equation}
    \lambda_{ijkl}=\frac{\varepsilon}{N+8}\left(\delta_{ij}\delta_{kl}+\delta_{ik}\delta_{kl}+\delta_{il}\delta_{jk}\right)\,,
\end{equation}
For $N=2$, the beta function (\ref{eq:betadefect}) depicted in Figure \ref{fig:O2metric}. Using (\ref{eq:Tdef}), the metric through one loop is
\begin{equation}
    T_{ij}=\delta_{ij}\left(1-\frac{h_kh_k}{3N+24}\right)-h_ih_j\frac{2}{3N+24}\,,
\end{equation}
which has two distinct eigenvalues for $N>1$
\begin{equation}
    1-\frac{h_ih_i}{N+8}\,,\qquad 1-\frac{h_ih_i}{3N+24}\,,
\end{equation}
where the first eigenvalue corresponds to the eigenvector $h_i$. One sees that coupling space is divided in two by the $N-1$ sphere $r^2=N+8$, with 
\begin{equation}
    \frac{d A}{d\ln{\mu}}=T_{ij}\lsp\partial_i A\lsp\partial_j A\begin{cases} >0\,, & h^2<N+8\\
    <0\,, & h^2>N+8\\
\end{cases}
\end{equation}
which agrees with the existence of a family of IR stable fixed points lying on the sphere\cite{Pannell:2023pwz}. This can also be seen by considering the boundary $r(\vec{\Omega})$ as defined above. Here,
\begin{equation}
    \lambda_{rrrr}=\frac{3\varepsilon}{N+8}
\end{equation}
is a constant function on $S^{N-1}$, so that every point on the sphere $r^2=N+8$ corresponds to a stable fixed point. As the beta function is zero along the sphere, there will be no flows between the fixed points themselves, and all RG flows must begin at either $h_i=0$ or at infinity. It is important to mention that we must properly define fixed points to live within coupling space modulo the action of the bulk symmetry group. The $O(N)$ bulk symmetry will act transitively on the sphere of fixed points, so that flows are defined only along a single ray, and the stable fixed point is in fact still unique here.

\subsection{Hypercubic model}
\begin{figure}[t!]
    \centering
\begin{tikzpicture}
\begin{axis}[
    xmin=-5, xmax=5,
    ymin=-5, ymax=5,
    xlabel= $h_1$,
    ylabel= $h_2$,
    ylabel style={rotate=-90},
    xticklabel style={scaled ticks=false, /pgf/number format/fixed, /pgf/number format/precision=3},
    yticklabel style={scaled ticks=false, /pgf/number format/fixed, /pgf/number format/precision=3},
    title={Solution validity for a line defect in an $D_4=\mathbb{Z}_2^2\ltimes S_2$ bulk},
    legend pos = outer north east,
    view={0}{90},
    domain=-5:5
]
\addplot+[
    no marks,color=black,fill=red!90!black,opacity=0.5]
table{datafiles/B2.dat};
\addplot+[only marks,mark=*,mark options={color=black,fill=black}, ultra thick] coordinates{(2.12132,2.12132) (-2.12132,2.12132) (2.12132,-2.12132) (-2.12132,-2.12132)};
\addplot+[only marks, mark=square*,mark options={color=black,fill=black}, ultra thick] coordinates{(4.24264,0) (0,4.24264) (-4.24264,0) (0,-4.24264)};
\addplot+[only marks, mark=x,mark options={color=black,fill=black}, thick] coordinates{(0,0)};
\addplot3 [
        gray,-stealth,samples=20,
        quiver={
            u={-(-(x/2) - x^3/18 + x*(x^2 + y^2)/12)/sqrt((-(x/2) - x^3/18 + x*(x^2 + y^2)/12)^2+(-(y/2) - y^3/18 + y*(x^2 + y^2)/12)^2)},
            v={-(-(y/2) - y^3/18 + y*(x^2 + y^2)/12)/sqrt((-(x/2) - x^3/18 + x*(x^2 + y^2)/12)^2+(-(y/2) - y^3/18 + y*(x^2 + y^2)/12)^2)},
            scale arrows=0.25,
        },
    ] (x,y,0);
\end{axis}
\end{tikzpicture}
    \caption{Perturbative defect RG flows within a $D_4=\mathbb{Z}_2^2\ltimes S_2$ bulk. The red region indicates the region $\mathcal{M}$. Beyond this region $A$ will monotonically increase. There are two non-trivial fixed points modulo $B_2$ transformations, indicated by squares and circles. The unique stable fixed point being found by minimizing the value of $A$ along the boundary.}
    \label{fig:B2metric}
\end{figure}
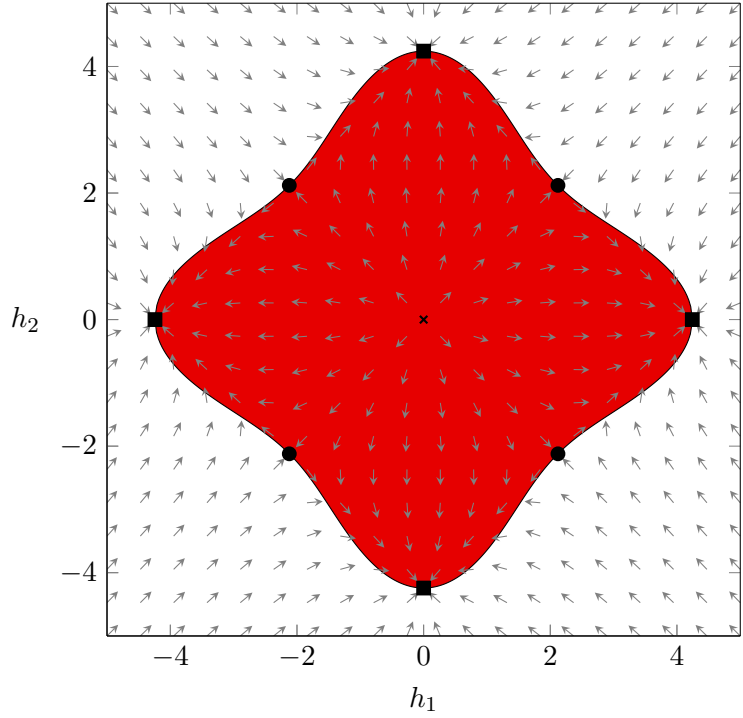
For a hypercubic scalar model, with symmetry $B_N=\mathbb{Z}_2^N\ltimes S_N$, the bulk interaction tensor will be
\begin{equation}
    \lambda_{ijkl}=\frac{\varepsilon}{3N}\left(\delta_{ij}\delta_{kl}+\delta_{ik}\delta_{kl}+\delta_{il}\delta_{jk}\right)+\frac{(N-4)\varepsilon}{3N}\delta_{ijkl}\,.
    \label{eq:hypercubicfp}
\end{equation}
The one loop metric in this case is then given by
\begin{equation}
    T_{ij}=\delta_{ij}\left(1-\frac{h_kh_k}{N}-\frac{(N-4)h_i^2}{9N}\right)-h_ih_j\frac{2}{9N}\,.
\end{equation}
Once again, $T_{ij}\sim \delta_{ij}$ for small $h_i$, so that $A$ will monotonically decrease close to the origin. The surface bounding this region is determined by
\begin{equation}
    r(\vec{\Omega})=\sqrt{\frac{9N}{3+(N-4)\sum_i\hat{h}_i^4}}\,,
    \label{eq:rhypercubic}
\end{equation}
where the $\hat{h}_i$ are to be expressed in spherical coordinates. Fixed points are thus classified by the manner in which they extremize $f(\Vec{\Omega})=\sum_i\hat{h}_i^4$ on the unit $N-1$ sphere. To determine the form of the fixed points, let us consider the function $f$ evaluated at an arbitrary unit vector $\hat{h}$. Varying the value of one of the coordinates, $\hat{h}_j$, will alter the corresponding term in $f$ by $\delta \hat{h}_j^4=\hat{h}_j^3\delta h_j$. The restriction to $S^{N-1}$ means that any change in $\hat{h}_j$ must also be accompanied by a change in the other coordinates as well, given by the constraint
\begin{equation}
    \sum_{i\neq j}\hat{h}_i\delta h_i=-\hat{h}_j \delta h_j\,,
\end{equation}
with different choices for the $\delta h_i$, $i\neq j$ corresponding to movement along different curves on the sphere. Choosing coordinates such that a single $\delta h_i$, $i\neq j$, is non-zero, one sees that
\begin{equation}
    \delta f=\hat{h}_j^3\delta h_j+\hat{h}_i^3\delta h_i=(\hat{h}_j^2-\hat{h}_i^2)\hat{h}_j\delta h_j\,.
\end{equation}
We thus immediately find that $\hat{h}$ is only an extremal point if all of its non-zero components have the same magnitude. The vector on the unit sphere $\hat{h}$ does not give a fixed point of the beta function on its own, and we must rescale by a factor of $r(\hat{h})$ to construct the true fixed points. If $1\leq m\leq N$ of the components are non-zero, we can use the $B_N$ symmetry to put $\hat{h}$ into the form
\begin{equation}
    \hat{h}_i=\begin{cases} \frac{1}{\sqrt{m}}\,, & i\leq m\\
    0\,, & i>m\\
\end{cases}\,,
\end{equation}
so that we have
\begin{equation}
    r(\hat{h})=\sqrt{\frac{9N}{3+\frac{(N-4)}{m}}}\,.
\end{equation}
The fixed point is then given by
\begin{equation}
        \hat{h}_i=\begin{cases} \sqrt{\frac{9N}{3m+(N-4)}}\,, & i\leq m\\
    0\,, & i>m\\
\end{cases}\,,
\end{equation}
which matches the form of the fixed points found in \cite{Pannell:2023pwz}.

The prefactor of $(N-4)$ is crucial in this analysis, as the sign will determine whether stable fixed points will maximize or minimize this function. For $N<4$ this prefactor will be negative, so that stable fixed points will correspond to maxima of $f$\footnote{While one might worry that these maxima would result in a denominator that is negative, and thus not correspond to fixed points, we take $N\in \mathbb{N}$ here. The only troublesome case here is $N=1$, where $\hat{h}_i^4=1$ and the denominator vanishes. However, the bulk interaction tensor (\ref{eq:hypercubicfp}) also vanishes for $N=1$, so that this simply reflects the lack of any defect fixed points in a free bulk.}. As one can easily convince oneself, these maxima occur precisely along the coordinate axes. As in the $O(N)$ model, these points lie in a single orbit under $B_N$ symmetry, which acts by permuting and reflecting the axes, and thus the stable fixed point can again be considered unique in this case. For $N=4$, $r(\Vec{\Omega})=12$ is constant on $S^{N-1}$, and the case is the same as $O(4)$ considered above. For $N>4$, radial maxima will correspond to minima of $f(\Vec{\Omega})$, that is points on $S^{N-1}$ of maximal distance from each of the coordinate axes. Considering the coordinate axes to lie at the vertices of a hypercube, these minima will lie on the $2^N$ faces, and as before lie in a single orbit under $B_N$. The stable fixed point is thus unique for all $N$. This analysis is identical to that found in \cite{Pannell:2023pwz}, which looked explicitly at the stability matrix and its eigenvalues.

For $N=2$ the situation is shown in Figure \ref{fig:B2metric}. Here, the bulk model is equivalent to two decoupled Ising models, which can be seen with the field redefinition
\begin{equation}
    \phi_1\rightarrow \phi'_1=\frac{\phi_1+\phi_2}{\sqrt{2}}\,,\quad     \phi_2\rightarrow \phi'_2=\frac{\phi_1-\phi_2}{\sqrt{2}}\,,
\end{equation}
which rotates the axes in the figure by $\pi/4$. The one-way nature of flows at boundary is made clear by the figure, which also demonstrates nicely how flows within $\mathcal{M}$ pool at points maximizing $r(\vec{\Omega})$. For $N=2$ (\ref{eq:rhypercubic}) takes the explicit form on $S^1$
\begin{equation}
    r(\theta)=\frac{6}{\sqrt{3-\cos{4\theta}}}\,.
    \label{eq:B2boundary}
\end{equation}
Along the boundary the beta function takes the form
\begin{equation}
    \vec{\beta}=\frac{3\sin{4\theta}}{(3-\cos{4\theta})^{3/2}}\lsp\varepsilon\,\hat{\theta}\,,
    \label{eq:B2boundarybeta}
\end{equation}
where $\hat{\theta}$ is the unit angular vector field. Examining (\ref{eq:B2boundary}) and (\ref{eq:B2boundarybeta}), one finds the equality
\begin{equation}
    \vec{\beta}=-\frac{\varepsilon}{4}\partial_\theta r \, \hat{\theta}\,,
    \label{eq:B2gradbeta}
\end{equation}
explicitly confirming (\ref{eq:betaboundary}).

\subsection{Hypertetrahedral model}
Finally, we turn to a line defect inside of a bulk model with hypertetrahedral symmetry $S_{N+1}\times Z_2$. As noted before, multiple stable fixed points have been found in these theories for $N=7,9$\cite{Pannell:2023pwz}, suggesting that the hypersurface $\partial \mathcal{M}$ permits a significantly richer fixed point structure than in the case of $O(N)$ or hypercubic bulks. Here, the bulk interaction tensor is most naturally formulated in terms of the $N+1$ vectors $\{e_i^\alpha\}$ forming the vertices of the hypertetrahedron, which satisfy the relations
\begin{equation}
    \sum_\alpha e_i^\alpha=0\,,\qquad \sum_\alpha e_i^\alpha e_j^\alpha=\delta_{ij}\,,\qquad e_i^\alpha e_i^\beta=\delta^{\alpha\beta}-\frac{1}{N+1}\,.
    \label{eq:evectorrules}
\end{equation}
These vectors can be explicitly constructed iteratively in $N$, using the rules
\begin{equation}
\begin{split}
    &(e_N)^\alpha_i=(e_{N-1})^\alpha_i\,\qquad i=1,\ldots,N\,, \,\alpha=1,\ldots, N\,, \\
    &(e_N)^\alpha_N=-\frac{1}{\sqrt{N(N+1)}}\,\qquad \alpha=1,\ldots,N\,, \\
    &(e_N)^{N+1}_i=\sqrt{\frac{N}{N+1}}\delta_{iN}\,.
\end{split}
\label{eq:edef}
\end{equation}
The $S_{N+1}$ symmetry acts on coupling space via the permutation of these vectors. For generic $N$, there are two bulk fixed points with this symmetry\cite{Osborn:2017ucf}
\begin{equation}
\begin{split}
    \lambda^+_{ijkl}=& \frac{\varepsilon}{3(N^2-5N+8)}\left(\delta_{ij}\delta_{kl}+\delta_{ik}\delta_{kl}+\delta_{il}\delta_{jk}\right)+\frac{(N-4)(N+1)\varepsilon}{3(N^2-5N+8)}\sum_\alpha e_i^\alpha e_j^\alpha e_k^\alpha e_l^\alpha\,,\\
    \lambda^-_{ijkl}=& \frac{\varepsilon}{3(N+3)}\left(\delta_{ij}\delta_{kl}+\delta_{ik}\delta_{kl}+\delta_{il}\delta_{jk}\right)+\frac{(N+1)\varepsilon}{3(N+3)}\sum_\alpha e_i^\alpha e_j^\alpha e_k^\alpha e_l^\alpha\,.
\end{split}
\end{equation}
For $N=4$ $\lambda^+_{ijkl}$ is equivalent to the hypercubic fixed point and for $N=5$ these two fixed points will coincide. For $N>5$ $\lambda^-_{ijkl}$ will be the stable of the two fixed points. The purely radial component of these tensors is
\begin{equation}
\begin{split}
    \lambda^+_{rrrr}=& \frac{\varepsilon}{N^2-5N+8}+\frac{(N-4)(N+1)\varepsilon}{3(N^2-5N+8)}\sum_\alpha \left(e_i^\alpha\hat{h}_i\right)^4 \,,\\
    \lambda^-_{rrrr}=& \frac{\varepsilon}{(N+3)}+\frac{(N+1)\varepsilon}{3(N+3)}\sum_\alpha \left(e_i^\alpha\hat{h}_i\right)^4\,,
\end{split}
\end{equation}
so that the boundary of $\mathcal{M}$ will be defined in each case by
\begin{equation}
    \begin{split}
        r^+(\vec{\Omega})=\left(\frac{1}{3(N^2-5N+8)}+\frac{(N-4)(N+1)}{9(N^2-5N+8)}\sum_\alpha \left(e_i^\alpha\hat{h}_i\right)^4\right)^{-1/2}\,,\\
        r^-(\vec{\Omega})=\left(\frac{1}{3(N+3)}+\frac{(N+1)}{9(N+3)}\sum_\alpha \left(e_i^\alpha\hat{h}_i\right)^4\right)^{-1/2}\,.
    \end{split}
    \label{eq:rhypertet}
\end{equation}
The function we must extremize to determine stability in both cases is $f(\vec{\Omega})=\sum_\alpha \left(e_i^\alpha\hat{h}^i\right)^4$, though there is once again a subtlety associated with the sign of the prefactor. One sees that the prefactor in $r^+$ will be negative for $N<4$, so that stable fixed points correspond to maxima of $f$. As in the hypercubic case, one can see that this maximum will occur precisely when we take $\hat{h}$ to point along one of the vectors $e^\alpha$. For $N>4$ the prefactor will be positive and stable fixed points will be constructed from vectors on the unit sphere minimizing $f$ for both hypertetrahedral bulks.

From the previous analysis of fixed points in the hypercubic model, it is natural to expect that the stable fixed points will lie in the center of the various hyperfaces of the hypertetrahedron. For that reason, let us consider as our ansatz on the unit sphere the vector
\begin{equation}
\textbf{h}^m_i=\sqrt{\frac{N+1}{m(N+1-m)}}\sum_{\alpha=1}^{m}e^{\alpha}_i\,,
\label{eq:ansatzm}
\end{equation}
where the first identity in (\ref{eq:evectorrules}) allows us to take this number of vectors in this sum to be $m\leq (N+1)/2$. It is important to note that this vector will not directly give a fixed point of the beta function. Instead, to get a fixed point we must rescale by $r(\vec{\Omega})|_{\textbf{h}^m}$ to move from the unit sphere to the surface $\mathcal{M}$. It is also important to note that the $S_{N+1}$ symmetry of the underlying bulk theory makes this choice is equivalent to choosing any $m$ of the vectors. To see that (\ref{eq:ansatzm}) is indeed a minimum of $f$, let us consider rotating this vector along the great circle spanned by $\textbf{h}^m$ and one of the $e^\alpha$ vectors. As $\{e^\alpha\}$ form a spanning set for the tangent space, these choices will be sufficient to explore the extremization of $f$. There are only two distinct choices one may make for $\alpha$: either $\alpha\leq m$ and $e^\alpha$ is in the sum, or $\alpha>m$. Let us first consider the former choice, taking $\alpha=1$. The great circle can be constructed explicitly using the Gram-Schmidt procedure on the set $\{\textbf{h}^m,\,e^1\}$ and the identities (\ref{eq:evectorrules}),
\begin{equation}
    \hat{h}_i(\theta)=\cos{\theta}\lsp\lsp\textbf{h}^m_i+\frac{\sin{\theta}}{\sqrt{1-\frac{N-m+1}{m \lsp N}}}\left(\sqrt{\frac{N+1}{N}}\lsp e^1_i-\sqrt{\frac{N-m+1}{m\lsp N}}\textbf{h}^m_i\right)\,.
\end{equation}
To construct $f$, we note that
\begin{equation}
    e^\alpha_i\hat{h}_i(\theta)=\begin{cases} \sqrt{\frac{N+1-m}{m(N+1)}}\cos{\theta}+\sqrt{\frac{m-1}{m}}\sin{\theta}\,, &  \alpha=1\\
     \sqrt{\frac{N+1-m}{m(N+1)}}\cos{\theta}-\frac{1}{\sqrt{m(m-1)}}\sin{\theta}\,, & 1<\alpha\leq m\\
     -\sqrt{\frac{m}{(N+1)(N+1-m)}}\cos{\theta}\,, & \alpha>m
\end{cases}\,,
\end{equation}
from which one finds that
\begin{equation}
    \frac{d f}{d \theta}\bigg|_{\theta=0}=0\,,\qquad\qquad\frac{d^2 f}{d \theta^2}\bigg|_{\theta=0}=\frac{8N-12m+8}{m(N+1-m)}\geq 0\,,
\end{equation}
where the last inequality follows from the restriction $m\leq (N+1)/2$. Thus, $\textbf{h}^m$ is stable with respect to perturbations that remain within the hyperface spanned by $\{e^1,\ldots,e^m\}$. To see what happens when moving outside of the hyperface, let us consider rotating towards $e^{N+1}$, so that $\hat{h}(\theta)$ now takes the form
\begin{equation}
    \hat{h}_i(\theta)=\cos{\theta}\lsp\lsp\textbf{h}^m_i+\frac{\sin{\theta}}{\sqrt{1-\frac{m}{N(N-m+1)}}}\left(\sqrt{\frac{N+1}{N}}\lsp e^{N+1}_i+\sqrt{\frac{m}{N(N-m+1)}}\textbf{h}^m_i\right)\,.
\end{equation}
The relevant dot products with the $e^\alpha$ vectors are now
\begin{equation}
    e^\alpha_i\hat{h}_i(\theta)=\begin{cases}
     -\sqrt{\frac{N+1-m}{m(N+1)}}\cos{\theta}\,, & \alpha\leq m \\
     \sqrt{\frac{m}{(N+1)(N+1-m)}}\cos{\theta}-\frac{1}{\sqrt{(N-m)(N+1-m)}}\sin{\theta}\,, & m<\alpha<N+1\\
     \sqrt{\frac{m}{(N+1)(N+1-m)}}\cos{\theta}+\sqrt{\frac{N-m}{N+1-m}}\sin{\theta}\,, &  \alpha=N+1
\end{cases}\,,
\end{equation}
from which we find
\begin{equation}
    \frac{d f}{d \theta}\bigg|_{\theta=0}=0\,,\qquad\qquad\frac{d^2 f}{d \theta^2}\bigg|_{\theta=0}=\frac{4(3m-N-1)}{m(N+1-m)}\,.
    \label{eq:hmstab}
\end{equation}
As $\partial_\theta f|_{\theta=0}=0$ in all directions, $\textbf{h}^m$ will be an extremal point of $f$ for all choices of $m$. For all $N$ we thus have a family of fixed points, $h^m_i=r(\textbf{h}^m)\textbf{h}^m_i$. The stability of $h^m_i$ is determined by the sign of the numerator in the second expression in (\ref{eq:hmstab}). For $3m>N+1$, $\partial^2_\theta f|_{\theta=0}>0$, so that $\textbf{h}^m$ will locally maximize $r(\vec{\Omega})$ in all directions and hence be totally stable. For $N+1>3m$, however, $\partial^2_\theta f|_{\theta=0}<0$, and instead $\textbf{h}^m$ will be unstable with respect to perturbations out of the hyperface\footnote{The case $N+1=3m$ is somewhat special. There exists an operator which is exactly marginal to all orders in perturbation theory corresponding to the existence of a centre manifold going through the fixed point. This marginal operator does not span a conformal manifold but instead indicates that the linear approximation breaks down in this direction. As the non-linear terms are cubic, this fixed point will be unstable.}.

At a given $N$, we have $\text{floor}\left(\frac{N+1}{2}\right)$ choices of $m$ that we might take, but the resulting fixed point will only be stable if $N+1>3m$. One can see that this will establish a lower bound on the number of stable fixed points that exist for a given $N$, given by
\begin{equation}
\#\text{ of stable fixed points}\geq\text{floor}\left(\frac{N+1}{2}\right)-\text{floor}\left(\frac{N+1}{3}\right)\,.
\label{eq:tetstablebound}
\end{equation}
The number of stable fixed points then grows at least as fast as $N$. For $N=7$ and $N=9$ this number is two, which matches the two stable fixed points found previously in \cite{Pannell:2023pwz}. To verify that the points found in that paper are in family $\{\textbf{h}^m\}$ let us consider the $N=7$ case, with the analysis for $N=9$ being identical. For $N=7$ there will be eight $e^\alpha$, so that we can take $m\in\{1,2,3,4\}$. Of these choices, only the latter two will be stable, so that we must consider the fixed points
\begin{equation}
    \textbf{h}^3_i=\sqrt{\frac{8}{15}}\left(e^6_i+e^7_i+e^8_i\right)\,,\qquad \textbf{h}^4_i=\frac{1}{\sqrt{2}}\left(e^5_i+e^6_i+e^7_i+e^8_i\right)\,.
\end{equation}
Matching the analysis in \cite{Pannell:2023pwz}, we can characterize these fixed points using the eigenvalues, $\kappa$, of the stability matrix and the invariant quantity
\begin{equation}
    H=-\frac{\varepsilon}{4}h_ih_i+\frac{1}{24}\lambda_{ijkl}h_i h_jh_kh_l\,.
\end{equation}
These invariants depend on the choice of $\lambda^-$ or $\lambda^+$ for the bulk interaction tensor, which we will consider separately. Choosing $\lambda^+$, we find that the value of $r^+(\vec{\Omega})$ at these points are given by $3\sqrt{55/17}$ and $\sqrt{33}$ respectively, so that the fixed points will be
\begin{equation}
    h^3_i=2\sqrt{\frac{66}{17}}\left(e^6_i+e^7_i+e^8_i\right)\,,\qquad h^4_i=\sqrt{\frac{33}{2}}\left(e^5_i+e^6_i+e^7_i+e^8_i\right)\,.
\end{equation}
The first of these has the invariants
\begin{equation}
    H=-\frac{495}{136}\,,\qquad \kappa\in\{1, \frac{14}{17}, \frac{14}{17}, \frac{2}{17}, \frac{2}{17}, \frac{2}{17}, \frac{2}{17}\}\,,
\end{equation}
while the second has the invariants
\begin{equation}
    H=-\frac{33}{8}\,,\qquad \kappa\in\{1,\frac{1}{2},\frac{1}{2},\frac{1}{2},\frac{1}{2},\frac{1}{2},\frac{1}{2}\}\,,
\end{equation}
which match (3.34) and (3.35) in \cite{Pannell:2023pwz} respectively. Choosing $\lambda^-$, the fixed points will be
\begin{equation}
    h^3_i=\frac{3\sqrt{5}}{2}\left(e^6_i+e^7_i+e^8_i\right)\,,\qquad h^4_i=\frac{3\sqrt{5}}{2}\left(e^5_i+e^6_i+e^7_i+e^8_i\right)\,.
\end{equation}
The first of these has the invariants
\begin{equation}
    H=-\frac{675}{256}\,,\qquad \kappa\in\{1, \frac{7}{16}, \frac{7}{16}, \frac{1}{16}, \frac{1}{16}, \frac{1}{16}, \frac{1}{16}\}\,,
\end{equation}
while the second has the invariants
\begin{equation}
    H=-\frac{45}{16}\,,\qquad \kappa\in\{1,\frac{1}{4},\frac{1}{4},\frac{1}{4},\frac{1}{4},\frac{1}{4},\frac{1}{4}\}\,,
\end{equation}
which again match (3.36) and (3.37) in \cite{Pannell:2023pwz} respectively, verifying that our method of computation produces the correct results. 

In fact, we believe that (\ref{eq:tetstablebound}) is likely saturated, that is to say that all stable fixed points in this theory lie within the family $\{\textbf{h}^m\}$. To motivate that these are the only possible stable points in this theory, let us consider analytic constructions for other fixed points and examine the manner in which they are unstable. As $f$ is quartic in the $e^\alpha$, fixed points will be vectors extremizing distance from both $\{e^\alpha\}$ and $\{-e^\alpha\}$. It is thus natural to consider an extension of (\ref{eq:ansatzm}) to include $-e^\alpha$ in the sum. We take the ansatz
\begin{equation}
    \mathfrak{h}^m_i=\frac{1}{\sqrt{2m}}\left(\sum_{\alpha=1}^m e^\alpha_i-\sum_{\alpha=m+1}^{2m} e^{\alpha}_i\right)\,,
\end{equation}
where $m<(N+1)/2$. The choice $m=(N+1)/2$ is disallowed here because one could use the first rule in (\ref{eq:evectorrules}) to write $\mathfrak{h}^{\frac{N+1}{2}}=\textbf{h}^{\frac{N+1}{2}}$, which was considered previously. We will again check whether or not these are extremum of $f$ by rotating towards the various $e^\alpha$. There are now three non-equivalent choices of $\alpha$, corresponding to whether $e^\alpha$ is in the first sum, the second, or in neither. As a representative of the first case, let us consider rotating towards $e^1$, so that the unit vector $\hat{h}$ will lie on the circle
\begin{equation}
    \hat{h}_i(\theta)=\cos{\theta}\lsp\lsp \mathfrak{h}^m_i+\frac{\sin{\theta}}{\sqrt{1-\frac{N+1}{2m\lsp N}}}\lsp\left(\sqrt{\frac{N+1}{N}}e^1_i-\sqrt{\frac{N+1}{2m\lsp N}}\mathfrak{h}^m_i\right)\,.
\end{equation}
The dot products of $\hat{h}(\theta)$ with the $e^\alpha$ vectors are then given by
\begin{equation}
    e^\alpha_i\hat{h}_i(\theta)=\begin{cases} \frac{1}{\sqrt{2m}}\cos{\theta}+\sqrt{\frac{2m\lsp N-N-1}{2m(N+1)}}\sin{\theta}\,, &  \alpha=1\\
     \frac{1}{\sqrt{2m}}\cos{\theta}+\frac{N+1-2m}{\sqrt{2m(1+N)(2m\lsp N-N-1)}}\sin{\theta}\,, & 1<\alpha\leq m\\
     -\frac{1}{\sqrt{2m}}\cos{\theta}-\frac{N+1-2m}{\sqrt{2m(1+N)(2m\lsp N-N-1)}}\sin{\theta}\,, & m<\alpha\leq 2m \\
     -\frac{1}{\sqrt{N(1+N)(1+\frac{N+1}{2m\lsp N})}}\sin{\theta}\,, & \alpha>2m
\end{cases}\,,
\end{equation}
and one finds that
\begin{equation}
    \frac{d f}{d \theta}\bigg|_{\theta=0}=0\,,\qquad\qquad\frac{d^2 f}{d \theta^2}\bigg|_{\theta=0}=\frac{4(6m^2-(N+1)^2+m(2N^2-N-3))}{m(N+1)((2m-1)N-1)}\geq0\,.
    \label{eq:hfrakstab1}
\end{equation}
One can check that (\ref{eq:hfrakstab1}) is also what one obtains when rotating with respect to a vector in the second sum, e.g. $e^{m+1}$, so that $\mathfrak{h}^m$ will be stable with respect to perturbations remaining within the hyperface spanned by $\{e^1,\ldots,e^m\}$ or $\{e^{m+1},\ldots,e^{2m}\}$. For rotations outside of either face, we can choose the vector $e^{N+1}$, for which one notes the property $e^{N+1}_i\mathfrak{h}^m_i=0$. The great circle then takes the simple form
\begin{equation}
    \hat{h}_i(\theta)=\cos{\theta}\lsp\lsp\mathfrak{h}^m_i+\sqrt{\frac{N+1}{N}}\sin{\theta}\lsp\lsp e^{N+1}_i\,,
\end{equation}
and the dot products are now
\begin{equation}
    e^\alpha_i\hat{h}_i(\theta)=\begin{cases} \frac{1}{\sqrt{2m}}\cos{\theta} -\frac{1}{\sqrt{N(N+1)}}\sin{\theta}\,, &  \alpha\leq m\\
     -\frac{1}{\sqrt{2m}}\cos{\theta} -\frac{1}{\sqrt{N(N+1)}}\sin{\theta}\,, & m<\alpha\leq 2m\\
     -\frac{1}{\sqrt{N(N+1)}}\sin{\theta}\,, & 2m<\alpha<N+1 \\
     \sqrt{\frac{N}{N+1}}\sin{\theta}\,, & \alpha=N+1
\end{cases}\,.
\end{equation}
A straightforward computation then yields
\begin{equation}
    \frac{d f}{d \theta}\bigg|_{\theta=0}=0\,,\qquad\qquad\frac{d^2 f}{d \theta^2}\bigg|_{\theta=0}=-\frac{2}{m}+\frac{12}{N(N+1)}\leq \frac{4(3-N)}{N(N+1)}\,,
    \label{eq:hfrakstab2}
\end{equation}
which confirms that $\mathfrak{h}^m$ will correspond to a fixed point of the beta function. As $N\geq3$ in cases of interest\footnote{The case of $N=2$ is in fact equivalent to the $O(2)$ model at one-loop, and $\partial_\theta^2f|{\theta=0}$ will vanish.}, this last expression will be non-positive, so that $\mathfrak{h}^m$ will be unstable with respect to perturbations moving out of the faces. 

There is an additional family of fixed points which can be constructed by noting that if we begin at the point $\textbf{h}^m$ and move off of the hyperedge in the direction of $e^{m+1}$ we will always encounter a second extremum of the same type as we move along the great circle. A third fixed point of the opposite stability must then lie between them. If $\textbf{h}^m$ is unstable with respect to pertubations out of the edge, it will correspond to a maximum of $f$ along the great circle spanned by $\{\textbf{h}^m,\,e^{m+1}\}$. As there is another maximum along this circle at the point $-e^{m+1}$, there must be a mimimum at $\textbf{h}^m-\tau\lsp e^{m+1}$ for some $\tau>0$. Similarly, if $\textbf{h}^m$ is stable, it will correspond to a minimum, so that there must be a maximum $\textbf{h}^m+\tau\lsp e^{m+1}$ for some $\tau>0$ before one reaches the second minimum at $\textbf{h}^{m+1}$. One can generalize this to include the fixed points $\textbf{h}^m$ more generally by considering the ansatz
\begin{equation}
    \mathfrak{h}^{m,m'}_i(\tau)=\sqrt{\frac{N+1}{m(N+1-m-m'\tau)+m'\tau((N+1-m')\tau-m)}}\left(\sum_{\alpha=1}^me^{\alpha}_i+\tau\sum_{\alpha=m+1}^{m+m'}e^{\alpha}_i\right)\,,
\end{equation}
where $m+m'<N+1$ and again we can use (\ref{eq:evectorrules}) to take $m,m'\leq (N+1)/2$. The point $\tau=1$ must be a fixed point, as it will correspond to $\textbf{h}^{m+m'}$, but we will show that there exists another non-zero value for $\tau$ which extremizes $f$. To construct $f(\tau)$ explicitly, we note that the dot products of $\hat{h}(\tau)$ with the various $e^\alpha$ vectors now take the form
\begin{equation}
    e^\alpha_i\hat{h}_i(\tau)=\begin{cases}
    \sqrt{\frac{N+1}{m(N+1-m-m'\tau)+m'\tau((N+1-m')\tau-m)}}\frac{N+1-m-m'\tau}{N+1}\,, & \alpha\leq m\\
     \sqrt{\frac{N+1}{m(N+1-m-m'\tau)+m'\tau((N+1-m')\tau-m)}}\frac{(N+1-m')\tau-m}{N+1}\,, & m<\alpha\leq m+m' \\
     -\sqrt{\frac{N+1}{m(N+1-m-m'\tau)+m'\tau((N+1-m')\tau-m)}}\frac{m+m'\tau}{N+1}\,, & \alpha>m+m'
\end{cases}\,.
\end{equation} 
One then finds that the derivative of $f$ with respect to $\tau$ is
\begin{equation}
    \frac{df}{d\tau}=-\frac{4mm'(m+m'-N-1)(1+N)(\tau-1)\tau\left((3m'-N-1)\tau+3m-N-1\right)}{(m^2+m'(m'-1-N)\tau-m(1+N-2m'\tau))^3}\,,
\end{equation}
which vanishes only for $\tau=0$, $1$ or the non-trivial solution
\begin{equation}
    \tau=-\left(\frac{3m-N-1}{3m'-N-1}\right)\,.
    \label{eq:tauchoice}
\end{equation}
It is important to note here that the sign of $\tau$ corresponds to whether or not $\textbf{h}^m$ and $\textbf{h}^{m'}$ have the same stability properties. If they are both stable or both unstable, then $\tau<0$, while if they are of opposite stability, then $\tau>0$. This is illustrated in figure \ref{fig:stabfig}, where we have plotted $f(\tau)$ for $N=9$ and three different choices of $m$ and $m'$.
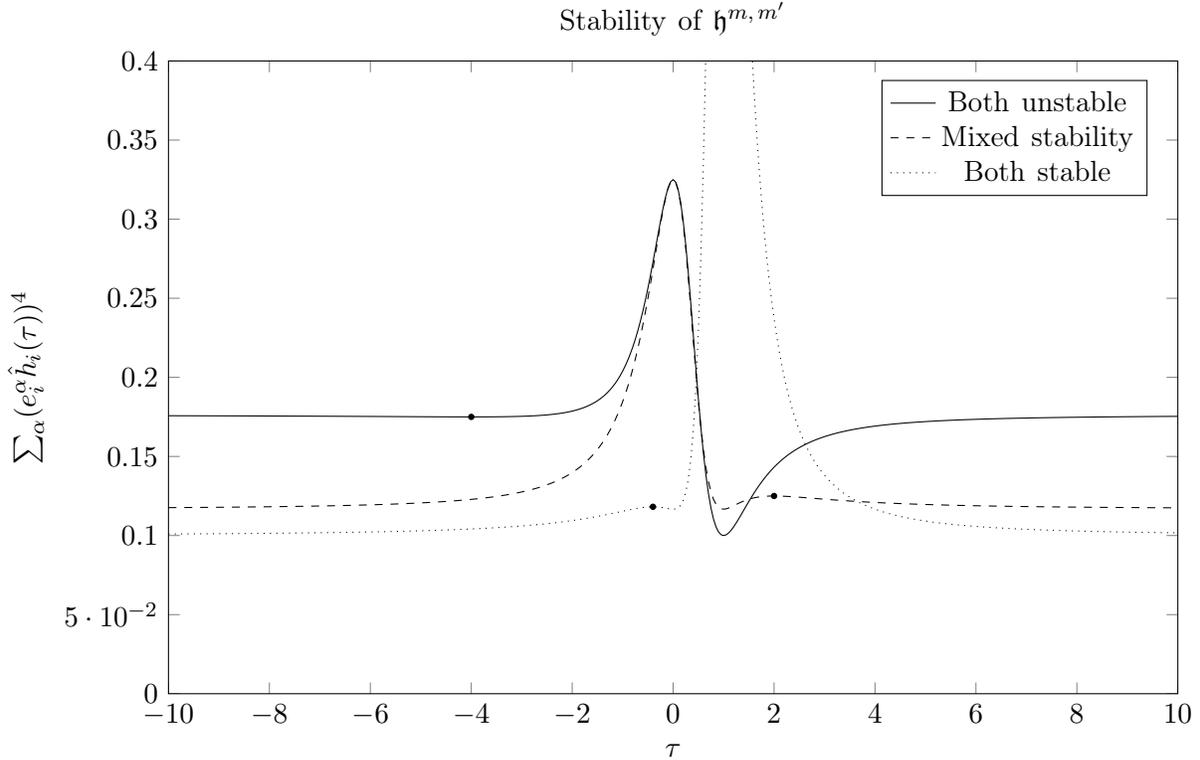
\begin{figure}[t!]
\centering
\begin{tikzpicture}
\begin{axis}[width=15cm,height=10cm,
    xmin=-10, xmax=10,
    ymin=0, ymax=0.4,
    xlabel= $\tau$,
    ylabel= $\sum_\alpha(e^\alpha_i\hat{h}_i(\tau))^4$,
    title={Stability of $\mathfrak{h}^{m,\,m'}$},
    legend pos = north east,
]
\addplot+[
    no marks,color=black]
table{datafiles/fuu.dat};
\addplot+[
    no marks,dashed,color=black]
table{datafiles/fus.dat};
\addplot+[
    no marks,dotted,color=black]
table{datafiles/fss.dat};
\addplot+[only marks, mark=*,mark options={color=black,fill=black},mark size=1pt] coordinates{(-4, 7/40) (2,1/8) (-2/5,13/110)};
\legend{Both unstable, Mixed stability, Both stable}
\end{axis}
\end{tikzpicture}
    \caption{A plot of $f(\tau)$ with $N=9$ for various choices of $m$ and $m'$. The solid line indicates the choice $m=2$, $m'=3$ where both $\textbf{h}^m$ and $\textbf{h}^{m'}$ are unstable, the dashed line indicates the choice $m=2$, $m'=4$ where $\textbf{h}^m$ is unstable and $\textbf{h}^{m'}$ is stable, and the dotted line indicates the choice $m=4$, $m'=5$ where both $\textbf{h}^m$ and $\textbf{h}^{m'}$ are stable. The black dots indicate the non-trivial fixed points $\mathfrak{h}^{m,m'}$ with $\tau$ given by (\ref{eq:tauchoice}).}
    \label{fig:stabfig}
\end{figure}
Features one can observe in this figure with regards to the stability of $\mathfrak{h}^{m,m'}$ will hold more generally. If both $\textbf{h}^m$ and $\textbf{h}^{m'}$ are unstable, $\tau=0$ and $\tau\rightarrow\pm\infty$ must correspond to maxima of $f(\tau)$, so that there must exist at least two minima, one with $\tau>0$ and one with $\tau<0$. As noted above, the point $\tau=1$ will give $\textbf{h}^{m+m'}$, which from the above analysis one will see is stable with respect to deformations towards either $\textbf{h}^m$ or $\textbf{h}^{m'}$ and is thus our minimum with $\tau>0$. As $\mathfrak{h}^{m,m'}$ is the only other fixed point, it must minimize $f(\tau)$. Similarly one can see, as illustrated in the figure, that if $\textbf{h}^m$ and $\textbf{h}^{m'}$ are stable or if they are of opposite stability, $\mathfrak{h}^{m,m'}$ must lie between two minima, and thus maximize $f(\tau)$. We see that $\mathfrak{h}^{m,m'}$ is only stable with respect to rotations along the $\textbf{h}^m$-$\textbf{h}^{m'}$ great circle if both $\textbf{h}^m$ and $\textbf{h}^{m'}$ correspond to unstable fixed points.

One may verify by rotating towards each of the $e^\alpha$ that (\ref{eq:tauchoice}) will indeed correspond to a fixed point of the beta function for any choice of $m$ and $m'$. To demonstrate that $\mathfrak{h}^{m,m'}$ is not a minimum of $f$ for any $m,m'\geq1$, let us consider rotating this solution towards the vector $e^{N+1}$, which we can assume is in neither sum. Constructing the great circle spanned by $\{
\mathfrak{h}^{m,m'},\,e^{N+1}\}$ as before, one finds that
\begin{equation}
    \frac{d^2f}{d\theta^2}\bigg|_{\theta=0}\propto-(3m-N-1)(3m'-N-1)
\end{equation}
where in the interest of space we have not included a multiplicative factor which is positive for all values of $N$, $m$ and $m'$. One sees that $\mathfrak{h}^{m,m'}$ will be locally maximize $f$ along this great circle when $m,m'<(N+1)/3$. Combining this with the previous analysis, we find that that no choices of $m$ and $m'$ will produce minima of $f$. Thus, all fixed points of the beta function corresponding to the family $\mathfrak{h}^{m,m'}$ will be unstable\footnote{In fact if $m$ and $m'$ are such that $3m-N-1=N+1-3m'$ then the resulting point will be stable. However as $\tau=1$ here, this is really just the fixed point $\textbf{h}^{m+m'}$ which was considered previously. As this is not a new fixed point, we will not include it in the family $\mathfrak{h}^{m,m'}$.}.

For $N\leq9$, numerical results indicate that all fixed points are classified by the families $\textbf{h}^m$, $\mathfrak{h}^m$ and $\mathfrak{h}^{m,m'}$. One is thus able to definitively say that all stable fixed points live within the family $\textbf{h}^m$ for these $N$, as all other fixed points are explicitly unstable. For larger values of $N$ it is possible that other, more complicated families of fixed points arise, for instance by generalizing $\mathfrak{h}^{m,m'}$ to include three or more terms. However, the manner in which the families $\mathfrak{h}^m$ and $\mathfrak{h}^{m,m'}$ are unstable is instructive. One can convince oneself that any point on the unit sphere can be written in the form
\begin{equation}
    \hat{h}_{i}=\sum_{\alpha=1}^{N+1}h_{\alpha}e^{\alpha}_i\,,
\end{equation}
where using (\ref{eq:evectorrules}) we can take $h_{\alpha}\geq0$ $\forall\alpha$. Representing $\mathfrak{h}^m$ and $\mathfrak{h}^{m,m'}$ in this way, one finds that the unstable direction corresponds to rotating towards $e^{\alpha'}$ with $h^{\alpha'}$ being the smallest non-zero coefficient. Geometrically, this rotation moves the vector towards either the center of a hyperface or towards the center of the hyperedge $h^{\alpha'}=0$. We expect that this intuition holds more generally, so that if there exists a fixed point with not all the $h_\alpha$ the same, it will be possible to decrease $f$ by moving towards the center of the nearest hyperface. Thus, the only stable fixed points at any value of $N$ should come from the family $\textbf{h}^m$, which lies at these centers.

\begin{figure}[t!]
    \centering
\begin{tikzpicture}
\begin{axis}[
    xmin=-5, xmax=5,
    ymin=-5, ymax=5,
    xlabel= $h_3$,
    ylabel= $h_4$,
    ylabel style={rotate=-90},
    xticklabel style={scaled ticks=false, /pgf/number format/fixed, /pgf/number format/precision=3},
    yticklabel style={scaled ticks=false, /pgf/number format/fixed, /pgf/number format/precision=3},
    title={Solution validity for a line defect in a $S_5\times\mathbb{Z}_2$ bulk},
    legend pos = outer north east,
    view={0}{90},
    domain=-5:5
]
\addplot+[
    no marks,color=black,fill=red!90!black,opacity=0.5]
table{datafiles/T4.dat};
\addplot+[only marks,mark=*,mark options={color=black,fill=black}, ultra thick] coordinates{(0,3.1749) (0,-3.1749) (3.07409, -0.793725) (-3.07409, 0.793725)};
\addplot+[only marks, mark=square*,mark options={color=black,fill=black}, ultra thick] coordinates{(-2.07255, 2.67565) (2.07255, -2.67565)};
\addplot+[only marks, mark=diamond*,mark options={color=black,fill=black}, ultra thick] coordinates{(3.07409, 2.38118) (-3.07409, -2.38118)};
\addplot+[only marks, mark=x,mark options={color=black,fill=black}, thick] coordinates{(0,0)};
\addplot3 [
        gray,-stealth,samples=20,
        quiver={
            u={-((x* (-756 + 71*x^2 - 6*sqrt(15)*x*y + 45*y^2))/1512)/sqrt(((x* (-756 + 71*x^2 - 6*sqrt(15)*x*y + 45*y^2))/1512)^2+((-2*sqrt(15)*x^3 - 756*y + 45*x^2*y + 75*y^3)/1512)^2))},
            v={-((-2*sqrt(15)*x^3 - 756*y + 45*x^2*y + 75*y^3)/1512)/sqrt(((x* (-756 + 71*x^2 - 6*sqrt(15)*x*y + 45*y^2))/1512)^2+((-2*sqrt(15)*x^3 - 756*y + 45*x^2*y + 75*y^3)/1512)^2))},
            scale arrows=0.25,
        },
    ] (x,y,0);
\end{axis}
\end{tikzpicture}
    \caption{Perturbative defect RG flows within at $T_4^-$ bulk. For simplicity we display only the two-dimensional invariant subspace $\{h_i|h_1=h_2=0\}$. The red region indicates the region $\mathcal{M}$. Beyond this region $A$ will monotonically increase. There are three non-trivial fixed points modulo the action of $S_5\times\mathbb{Z}_2$. There appear to be two types of stable fixed points, indicated by diamonds and squares, but considering the situation in the full four-dimensional coupling space one finds that the squares lie on saddle points, so that only the diamonds are stable against all deformations.}
    \label{fig:T4metric}
\end{figure}
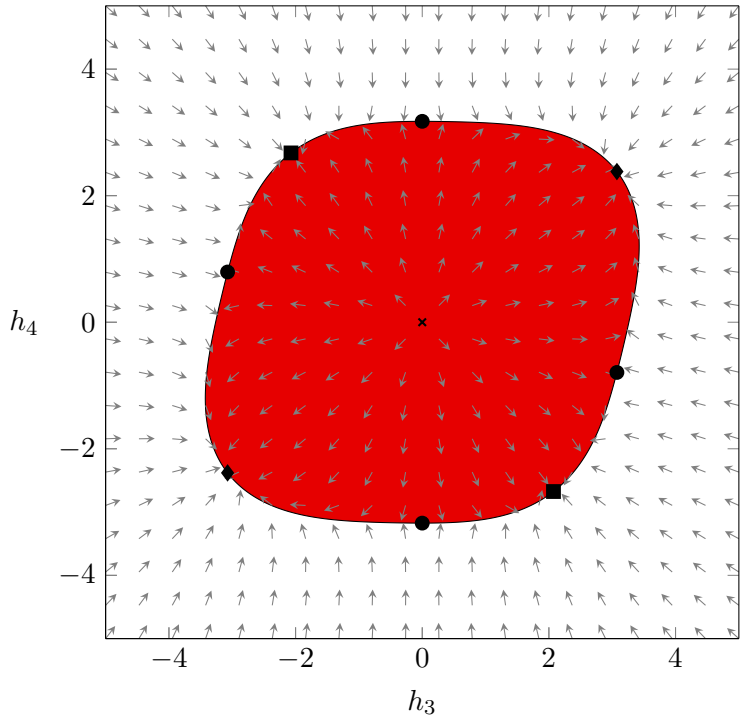

The situation for $N=4$, with the choice $\lambda^-$, is shown in Figure \ref{fig:T4metric}, where for simplicity we show only the two-dimensional invariant subspace $\{h_i|h_1=h_2=0\}$. As before, it is important to recognize that we must properly define flows by quotienting out the action of the symmetry group $S_{5}\times \mathbb{Z}_2$, so that the figure displays redundency. Taking polar coordinates, one finds that (\ref{eq:rhypertet}) implies that the boundary curve is given by
\begin{equation}
     r=6\sqrt{\frac{21}{66-2\cos{2\theta}+7\cos{4\theta}-2\sqrt{15}\sin{2\theta}-\sqrt{15}\sin{4\theta}}}\,.
 \end{equation}
 As expected one again finds that along the boundary
 \begin{equation}
    \vec{\beta}=-\frac{\varepsilon}{4}\partial_\theta r \, \hat{\theta}\,,
\end{equation}
so that flows beginning within $\mathcal{M}$ are confined to stay within the region. The circles indicate points in the class $\textbf{h}^1$, while the diamonds represent the class $\textbf{h}^2$. The squares instead lie in the class $\mathfrak{h}^1$. It is important to note that the stability of $\mathfrak{h}^1$ apparent in this figure is only with respect to deformations that remain within this two-dimensional subspace of the full four-dimensional coupling space. Considering the situation in the three-dimensional hyperplane $h_1=0$, one sees that the squares in fact lie at saddle points of the boundary manifold, so that there exist deformations which trigger a flow away from these points, as expected from the prior discussion about the family of fixed points $\mathfrak{h}^m$. This can be verified explicitly by examining the eigenvalues $\kappa$ of the stability matrix, which show two negative $\kappa$'s with eigenvectors living within the $h_1-h_2$ plane.

\section{Stability in scalar surface defects}\label{surfacesec}
Now, let us consider deforming the bulk model by the introduction of a $p=2$ dimensional surface defect. This will introduce a term into the action quadratic in the scalar field $\phi_i$
\begin{equation}
    S_{surface}=\int d^{2}\vec{x}\, h_{ij}\phi_i(\vec{x},\vec{0})\phi_j(\vec{x},\vec{0})\,,
\end{equation}
where $h_{ij}$ is symmetric. This deformation has been studied previously for an $O(N)$ bulk in \cite{Trepanier:2023tvb,Giombi:2023dqs,Krishnan:2023cff}, but one may easily extend those results to a generic bulk interaction tensor $\lambda_{ijkl}$ to find the beta function\footnote{Note that here we have rescaled $\lambda_{ijkl}$ and $h_{ij}$ to remove factors of $\pi$ from the beta function.}
\begin{equation}
    \beta_{ij}=-\varepsilon\lsp h_{ij}+h_{ik}h_{kj}+\lambda_{ijab}h_{ab}\,.
    \label{eq:betasurface}
\end{equation}
This beta function is gradient, and may be written as before in the form
\begin{equation}
    \beta_{ij}=-\frac{\varepsilon}{2}T_{ij,kl}\partial_{kl}r^2
\end{equation}
for the metric
\begin{equation}
    T_{ij,kl}=\frac{1}{2}(\delta_{ik}\delta_{jl}+\delta_{il}\delta_{jk})-\frac{1}{\varepsilon}\lambda_{ijkl}-\frac{1}{\varepsilon}h_{ik}h_{jl}\,,
\end{equation}
where we define the radius in coupling space to be $r=h_{ij}h_{ij}$. The dependence of this $A$ function upon the energy scale $\mu$ can be determined by the differential equation
\begin{equation}
    \frac{d A}{d\ln{\mu}}=\varepsilon r^2(\varepsilon-\lambda_{ijkl}\hat{h}_{ij}\hat{h}_{kl}-\hat{h}_{ij}\hat{h}_{ik}\hat{h}_{jk}r)\,,
\end{equation}
for vectors on the unit sphere $\hat{h}_{ij}=h_{ij}/r$. At fixed points $\frac{d A}{d\ln{\mu}}=0$ by definition, so that all non-trivial fixed points must lie on the surface defined by
\begin{equation}
    r=\frac{\varepsilon-\lambda_{ijkl}\hat{h}_{ij}\hat{h}_{kl}}{\hat{h}_{ij}\hat{h}_{ik}\hat{h}_{jk}}\,.
    \label{eq:rsurface}
\end{equation}
If the numerator of this expression is negative at a given fixed point $h^*_{ij}$, then
\begin{equation}
    h^*_{ij}S_{ij,kl}h^*_{kl}=\varepsilon h^*_{ij}h^*_{ij}-\lambda_{ijkl}h^*_{ij}h^*_{kl}<0\,,
\end{equation}
so that there must exist at least one negative eigenvalue and corresponding relevant operator at that point. In examining stable fixed points, we can thus restrict to the case $\varepsilon h^*_{ij}h^*_{ij}-\lambda_{ijkl}h^*_{ij}h^*_{kl}>0$. In fact, for such points a straightforward extension of Michel's theorem continues to hold.

\begin{theorem}
    If $h^*_{ij}$ is a stable fixed point of (\ref{eq:betasurface}), then it must be a global minimum of the function $$A=-\frac{\varepsilon}{2}h_{ij}h_{ij}+\frac{1}{2}\lambda_{ijkl}h_{ij}h_{kl}+\frac{1}{3}h_{ij}h_{jk}h_{ki}$$ in the space of fixed points\footnote{Note that global minimization in the space of fixed points can be recast as global minimization of $A$ restricted to the surface defined by (\ref{eq:rsurface}).}. If $$(u,v)_\lambda=u_{ij}v_{ij}-\lambda_{ijkl}u_{ij}v_{ij}$$ is a positive definite quadratic form then the stable defect must preserve the bulk symmetry.
\end{theorem}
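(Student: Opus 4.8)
The plan is to transcribe the proof of the bulk theorem after recasting everything in terms of the symmetric matrix $H$ with entries $h_{ij}$, so that $\beta_{ij}=\partial A/\partial h_{ij}$ is the gradient of the stated $A$. First I would contract the fixed point equation $\varepsilon h^*_{ij}=h^*_{ik}h^*_{kj}+\lambda_{ijkl}h^*_{kl}$ with $h^*_{ij}$ to obtain $\lambda_{ijkl}h^*_{ij}h^*_{kl}=\varepsilon\,\Tr(H_*^2)-\Tr(H_*^3)$, and substitute this into $A$ to find the clean on-shell value $A(h^*)=-\tfrac16\Tr(H_*^3)$. Since the restriction already imposed on stable points before the theorem is exactly $\varepsilon\,\Tr(H_*^2)-\lambda_{ijkl}h^*_{ij}h^*_{kl}=\Tr(H_*^3)>0$, every stable fixed point has $A(h^*)<0$, and comparing $A$-values over the fixed point set is the same as comparing $\Tr(H^3)$.

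Next I would take two distinct fixed points $H_1\neq H_2$, restrict to the plane $h_{ij}=(1+s)h^1_{ij}+t\,h^2_{ij}$, and evaluate the $2\times2$ Hessian of $A$ at $H_1$. The linchpin is the identity $\Tr(H_1^2H_2)=\Tr(H_1H_2^2)$, valid for any pair of fixed points: contracting the fixed point equation of $H_1$ with $h^2_{ij}$ and that of $H_2$ with $h^1_{ij}$ gives $\varepsilon\,\Tr(H_1H_2)=\Tr(H_1^2H_2)+\lambda_{ijkl}h^1_{ij}h^2_{kl}$ and $\varepsilon\,\Tr(H_1H_2)=\Tr(H_1H_2^2)+\lambda_{ijkl}h^1_{ij}h^2_{kl}$, whose difference is the claim. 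Using this together with the on-shell relations, I expect the Hessian to collapse to
\begin{equation}
S=\begin{pmatrix} a_1 & d \\ d & 2d-a_2\end{pmatrix}\,,\qquad a_1=\Tr(H_1^3)\,,\quad a_2=\Tr(H_2^3)\,,\quad d=\Tr(H_1^2H_2)\,,
\end{equation}
which is \emph{identical} in form to the matrix $S$ appearing in the bulk proof. I would then invoke the bulk eigenvalue analysis verbatim: one eigenvalue is always positive, and stability of $H_1$ forces $a_2\leq a_1$ whenever $a_2\geq0$. Fixed points with $a_2<0$ need no such analysis, since there $A(H_2)>0>A(H_1)$ immediately. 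In all cases a stable $H_1$ has $A(H_1)\leq A(H_2)$, establishing the first statement that a stable fixed point globally minimizes $A$ over the fixed point set.

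For the final statement I would introduce the bilinear form $(u,v)_\lambda=u_{ij}v_{ij}-\lambda_{ijkl}u_{ij}v_{kl}$ (taking $\varepsilon=1$ as elsewhere, so the fixed point equation reads $h^*-\lambda[h^*]=H_*^2$). Then $(h^a,h^b)_\lambda=\Tr(H_aH_b^2)$, and its symmetry in $a,b$ re-expresses the identity above; in particular $a_1=(h^1,h^1)_\lambda$, $a_2=(h^2,h^2)_\lambda$, $d=(h^1,h^2)_\lambda$, so that $(h^1-h^2,h^1-h^2)_\lambda=a_1+a_2-2d$. The only way the minimization can fail to be strict is the marginal case $a_1=a_2=d$, which makes $\det S=-(a_1-d)^2$ vanish and yields a zero eigenvalue; but this is precisely $(h^1-h^2,h^1-h^2)_\lambda=0$, which positive-definiteness forbids for $H_1\neq H_2$. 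Hence when $(u,v)_\lambda$ is positive definite the global minimum is strict. Since $A$ and the set of fixed points are invariant under the bulk symmetry group, any group element maps a stable $h^*$ to a fixed point with equal $A$; strict uniqueness then forces $h^*$ to be fixed by the whole group, i.e.\ to preserve the bulk symmetry.

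I expect the main obstacle to be establishing that the restricted Hessian reduces exactly to the bulk matrix $S$, which hinges entirely on the fixed-point identity $\Tr(H_1^2H_2)=\Tr(H_1H_2^2)$ (false for generic symmetric matrices, but forced here by the two fixed-point equations); once this is in hand, the whole first statement is a transcription of the bulk argument. For the symmetry statement the decisive observation is that the marginal direction of $S$ coincides with the null vector of $(\,\cdot\,,\,\cdot\,)_\lambda$, so positive-definiteness is exactly the hypothesis needed to upgrade the weak minimization to a strict one and let the symmetry-averaging argument close.
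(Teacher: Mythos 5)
Your proposal is correct and takes essentially the same route as the paper, whose proof is precisely the bulk Michel argument transcribed with the inner product $(h,h)_\lambda=\varepsilon\lsp h_{ij}h_{ij}-\lambda_{ijkl}h_{ij}h_{kl}$ replacing $(h,h)$, including the restriction to fixed points with $(h^*,h^*)_\lambda>0$ (since $A=-\tfrac{1}{6}(h^*,h^*)_\lambda>0$ otherwise) and the positive-definiteness/symmetry-orbit argument for uniqueness. Your explicit verification that the restricted Hessian collapses to the bulk matrix via the fixed-point identity $\Tr(H_1^2H_2)=\Tr(H_1H_2^2)$ merely fills in a computation the paper leaves implicit under ``the proof is identical.''
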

\begin{proof}
    The proof is identical to the one presented in section \ref{michelsec}. The only complication is that the quadratic term in $A$ no longer takes the simple form $(h,h)=h_{ij}h_{ij}$, but is instead the inner product $(h,h)_{\lambda}=\varepsilon h_{ij}h_{ij}-\lambda_{ijkl}h_{ij}h_{kl}$. While for generic $\lambda_{ijkl}$ this may be negative at some fixed points, previous analysis tells us that these fixed points will already be unstable, so that we can restrict our attention to fixed points such that $(h^*,h^*)_{\lambda}>0$. As $A=-\frac{1}{6}(h^*,h^*)_\lambda$ at fixed points, points with negative $(h^*,h^*)_{\lambda}$ will have $A>0$, and thus not affect our conclusion of stable fixed points globally minimizing $A$.

    However, the difference in quadratic forms does impact the conclusions about fixed point uniqueness. In the proof of Michel's theorem we used that if $(v,v)=0$ then $v=0$ to show that if two fixed points have the same value of $A$ then they must both be unstable. As we can no longer a priori say that $(,)_\lambda$ is a positive definite form, it is possible for $(v,v)_\lambda=0$ to generically have non-trivial solutions. If we assume that $(v,v)_\lambda>0$ for all non-zero $v$, then proof follows as before and if $h^*_{ij}$ is stable then it must be unique. If a fixed point breaks the bulk symmetry group, then the broken group elements will move $h^*_{ij}$ within an orbit of distinct but physically equivalent fixed points. As $A$ is invariant under any group action, these points will all have the same value of $A$, so that if $(,)_\lambda$ is positive definite, we see that the only totally possible stable fixed point is one which preserves the entire bulk symmetry group.
\end{proof}

As with Michel's original theorem, this is applicable also to symmetry preserving deformations. By restricting to a symmetry preserving submanifold, what we mean by a unique fixed point may change. Fixed points breaking the bulk symmetry group to some subgroup $H\subset G$ will lie a conformal manifold corresponding to the orbit of the fixed point under $G$. A hyperplane of deformations preserving $H$ will only intersect this orbit at a single point, so that this fixed point appears to be unique from the perspective of the theorem applied to this hyperplane. 

As $\lambda_{ijkl}$ is a rank-4 totally symmetric tensor, it is always possible to decompose it into symmetry $O(N)$ representations like
\begin{equation}
    \lambda_{ijkl}=d_0\left(\delta_{ij}\delta_{kl}+\text{Perms.}\right)+\left(d_{2,ij}\delta_{kl}+\text{Perms.}\right)+d_{4,ijkl}\,,
\end{equation}
where $d_{2,ij}$ and $d_{4,ijkl}$ are symmetric and traceless. If the symmetry group $H\subseteq O(N)$ preserved by the tensor $\lambda_{ijkl}$ does not have a rank-2 invariant tensor other than $\delta_{ij}$, this middle term will vanish, as is often the case for bulk critical models\cite{Osborn:2017ucf,Osborn:2020cnf}. One then sees that the only $h_{ij}$ one can construct to preserve $H$ will be $h_{ij}=h \delta_{ij}$, for which the beta function is
\begin{equation}
    \beta_{ij}=\left(-\varepsilon +(N+2)d_0 +h\right)h_{ij}\,.
\end{equation}
The symmetry preserving defect in these cases is then explicitly given by
\begin{equation}
    h_{ij}=\left(\varepsilon-(N+2)d_0\right)\delta_{ij}\,,
    \label{eq:stabsurfacedef}
\end{equation}
where the value of $d_0$ depends upon the chosen bulk critical model. The above theroem then implies that if $\lambda_{ijkl}$ is such that
\begin{equation}
    d_{2,ij}=0\,,\quad\text{and}\quad \varepsilon\lsp h_{ij}h_{ij}-\lambda_{ijkl}h_{ij}h_{kl}>0\quad\forall h_{ij}\neq0\,,
\end{equation}
then the only possible totally stable defect is the one given in (\ref{eq:stabsurfacedef}). For the sake of completeness, let us now examine the application of this theorem for a number of bulk critical models.

\subsection{Free bulk}
Unlike the line defect, the presence of the defect-defect term in the beta function permits the existence of non-trivial surface dCFTs living inside of a free bulk. As $\lambda_{ijkl}=0$ here, the conditions $d_{2,ij}$ and $\varepsilon\lsp h_{ij}h_{ij}-\lambda_{ijkl}h_{ij}h_{kl}>0$ trivially hold, so that the theorem tells us that the only possible totally stable fixed point will be
\begin{equation}
    h_{ij}=\varepsilon \delta_{ij}\,.
\end{equation}
The stability matrix for this fixed point takes the simple form
\begin{equation}
    S_{ij,kl}=\frac{\varepsilon^2}{2}(\delta_{ik}\delta_{jl}+\delta_{il}\delta_{jk})\,,
\end{equation}
which is manifestly positive-definite for all values of $N$. The free bulk thus has the unique totally stable fixed point $h_{ij}=\varepsilon\delta_{ij}$ for all $N$. As the symmetry preserving defect is totally stable for all values of $N$, the theorem also guarantees that it will be the only fixed point stable under symmetry preserving deformations for any subgroup $H\subseteq O(N)$.

\subsection{\texorpdfstring{$O(N)$}{O(N)} model}
The simplest interacting critical bulk model is the generalization of the Wilson-Fischer fixed point to $N$ fields
\begin{equation}
    \lambda_{ijkl}=\frac{\varepsilon}{N+8}\left(\delta_{ij}\delta_{kl}+\text{Perms.}\right)\,,
\end{equation}
which remains invariant under generic $O(N)$ rotations. One sees immediately that $d_{2,ij}=0$ by definition, so that there will exist a symmetry-preserving fixed point
\begin{equation}
    h_{ij}=\frac{6\varepsilon}{N+8}\delta_{ij}\,.
    \label{eq:surfONstab}
\end{equation}
To see that this is the only possible stable fixed point, we must examine the positive-definiteness of the inner product $(h,h)_{\lambda}$. Using the explicit form of $\lambda_{ijkl}$ we have that
\begin{equation}
\begin{split}
    (h,h)_{\lambda}&=\frac{(N+6)\varepsilon}{N+8}h_{ij}h_{ij}-\frac{\varepsilon}{N+8}h_{ii}h_{jj} \\
    &=\frac{(N+6)\varepsilon}{N+8}\sum_{i\neq j}h_{ij}^2+\frac{(N+5)\varepsilon}{N+8}\sum_{i}h_{ii}^2-\frac{\varepsilon}{N+8}\sum_{i\neq j}h_{ii}h_{jj}\,.
\end{split}
\end{equation}
Completing the square using the last two terms, this is equal to
\begin{equation}
    (h,h)_\lambda=\frac{(N+6)\varepsilon}{N+8}\sum_{i\neq j}h_{ij}^2+\frac{6\varepsilon}{N+8}\sum_{i}h_{ii}^2+\frac{\varepsilon}{N+8}\sum_{i>j}(h_{ii}-h_{jj})^2>0\quad\forall N\,.
\end{equation}
One sees that (\ref{eq:surfONstab}) is the only possible stable defect in an $O(N)$ bulk. To see for which $N$ this point is indeed stable, we need to consider the stability matrix explicitly
\begin{equation}
    S_{ij,kl}=\frac{\varepsilon}{N+8}\delta_{ij}\delta_{kl}+\frac{(6-N)\varepsilon}{2(N+8)}(\delta_{ik}\delta_{jl}+\delta_{il}\delta_{jk})\,.
\end{equation}
Contracting with an arbitrary vector $v_{ij}$, this is
\begin{equation}
    v_{ij}S_{ij,kl}v_{kl}=\frac{(6-N)\varepsilon}{N+8}v_{ij}v_{ij}+\frac{\varepsilon}{N+8}v_{ii}v_{jj}\,.
\end{equation}
For $N\leq 6$ this is the positive sum of squares, and thus manifestly positive definite, but for $N>6$ there will become negative off-diagonal squares for $i\neq j$. We thus conclude that for $N\leq 6$ (\ref{eq:surfONstab}) will be the unique totally stable fixed point in an $O(N)$ bulk, and that for $N>6$ there will be no totally stable defects. We note that this analysis matches the conclusions previously reached in \cite{Trepanier:2023tvb,Giombi:2023dqs,Krishnan:2023cff}.

\subsection{Hypercubic model}
Another well-studied bulk critical model is the hypercubic model, where the interaction tensor
\begin{equation}
    \lambda_{ijkl}=\frac{\varepsilon}{3 N}\left(\delta_{ij}\delta_{kl}+\text{Perms.}\right)+\frac{N-4}{3N}\delta_{ijkl}
    \label{eq:lambdacubic}
\end{equation}
breaks $O(N)$ symmetry to the hypercubic group $B_N=\mathbb{Z}_2^N\rtimes S_N$. It is important to note that while $\delta_{ijkl}$ is symmetric, it is not traceless, so that the correct value of $d_0$ is not $1/3N$ but instead
\begin{equation}
    d_0=\frac{2(N-1)}{3N (N+2)}\varepsilon\,.
\end{equation}
Noting that $d_{2,ij}=0$, we have the symmetry preserving defect
\begin{equation}
    h_{ij}=\frac{(N+2)\varepsilon}{3N}\delta_{ij}\,.
    \label{eq:surfBNstab}
\end{equation}
To see that this is the only possible totally stable fixed point, we note that
\begin{equation}
\begin{split}
    (h,h)_{\lambda}&=\frac{(3N-2)\varepsilon}{3N}h_{ij}h_{ij}-\frac{\varepsilon}{3N}h_{ii}h_{jj}-\frac{(N-4)\varepsilon}{3N}\sum_i h_{ii}^2\\
    &=\frac{(3N-2)\varepsilon}{3N}\sum_{i\neq j}h_{ij}^2+\frac{(N+2)\varepsilon}{3N}\sum_{i}h_{ii}^2+\frac{\varepsilon}{3N}\sum_{i>j}(h_{ii}-h_{jj})^2\,.
\end{split}
\end{equation}
As $3N-2$ is positive when $N$ is a natural number, $(h,h)_{\lambda}>0$ for all non-zero $h$ so that the uniqueness theorem is applicable. As before, the theorem only guarantees that no other fixed points will be totally stable, so to determine the stability of the symmetry preserving defect we must examine the form of the stability matrix explicitly
\begin{equation}
    S_{ij,kl}=\left(\frac{\varepsilon}{N}-\frac{\varepsilon}{6}\right)(\delta_{ik}\delta_{jl}+\delta_{il}\delta_{jk})+\frac{\varepsilon}{3N}\delta_{ij}\delta_{kl}+\frac{N-4}{3N}\delta_{ijkl}\,.
\end{equation}
Contracting with two factors of an arbitrary vector $v_{ij}$ yields
\begin{equation}
    v_{ij}S_{ij,kl}v_{kl}=\left(\frac{2\varepsilon}{N}-\frac{\varepsilon}{3}\right)\sum_{i\neq j}v_{ij}^2+\frac{2\varepsilon}{3N}\sum_ih_{ii}^2+\frac{\varepsilon}{3N}(v_{ii})^2\,.
\end{equation}
For $N>6$ the prefactor of the first term will be negative, so that there will exist vectors such that $v_{ij}S_{ij,kl}v_{kl}<0$, which guarantees the existence of a relevant operator at the fixed point. We thus conclude that for $N\leq 6$ the symmetry preserving defect given by (\ref{eq:surfBNstab}) is the unique totally stable fixed point in a hypercubic bulk, and that for $N>6$ there are no totally stable fixed points at all.

\subsection{MN model}
It is possible to generalize the form of the hypercubic model to construct a bulk which is invariant under the symmetry group $O(m)^n\rtimes S_n$, where $N=mn$. Breaking up the $O(N)$ index into double indices $i_r$, where $i=1,\,\ldots,\, n$ and $r=1,\,\ldots,\, m$, the interaction tensor at the bulk critical point takes the form
\begin{equation}
    \lambda_{i_a j_{b} k_{c}l_{d}}=\frac{(4-m)\varepsilon}{m(N-16)+8(2+N)}\left(\delta_{i_a j_{b}}\delta_{k_{c}l_{d}}+\text{Perms.}\right)+\frac{(N-4)\varepsilon}{m(N-16)+8(2+N)}\delta_{abcd}\left(\delta_{ij}\delta_{kl}+\text{Perms.}\right)\,.
\end{equation}
Setting $m=1$ yields the interaction tensor for the hypercubic model given in (\ref{eq:lambdacubic}). Once again, $d_{2,ij}=0$, but the last term is not traceless, so that the value of $d_0$ must be calcluated using $d_0=\lambda_{iijj}/(N(N+2))$ to find
\begin{equation}
    d_0=\frac{6(N-m)}{(N+2)(m(N-16)+8(2+N))}\varepsilon\,.
\end{equation}
As one can check, setting $m=1$ in this expression yields the proper value for $d_0$ in the hypercubic model. For all values of $m$ and $N$ we will thus have a symmetry-preserving defect fixed point given by
\begin{equation}
    h_{ij}=\frac{(m(N-10)+2(8+N))\varepsilon}{m(N-16)+8(2+N)}\delta_{ij}\,.
    \label{eq:surfMNstab}
\end{equation}
In order to apply the theorem of uniqueness we once again study the positive-definiteness of $(h,h)_\lambda$ in this model
\begin{equation}
\begin{split}
    (h,h)_\lambda=&\frac{m(N-14)+8(N+1)}{m(N-16)+8(N+2)}\varepsilon h_{i_a j_b}h_{i_a j_b}-\frac{4-m}{m(N-16)+8(N+2)}\varepsilon(h_{i_a i_a})^2\\ &-\frac{N-4}{m(N-16)+8(N+2)}\varepsilon h_{i_a i_a}h_{j_a j_a}-\frac{2(N-4)}{m(N-16)+8(N+2)}\varepsilon h_{i_a j_a}h_{i_a j_a} \\
    =&\sum_{a\neq b, i,j}\frac{m(N-14)+8(N+1)}{m(N-16)+8(N+2)}\varepsilon h_{i_a j_b}^2+\sum_{a,i\neq j} \frac{m(N-14)+6N+16}{m(N-16)+8(N+2)}\varepsilon h_{i_a j_a}^2 \\ &+\sum_{a,i} \frac{16+m(N-13)+5 N}{m(N-16)+8(N+2)}\varepsilon h_{i_a i_a}^2-\sum_{a\neq b i,j}\frac{4-m}{m(N-16)+8(N+2)}\varepsilon h_{i_a i_a}h_{j_b j_b}\\ &-\sum_{a, i\neq j}\frac{N-m}{m(N-16)+8(N+2)}\varepsilon h_{i_a i_a} h_{j_a j_a}\,.
\end{split}
\end{equation}
In order to ensure that we obtain a positive sum of squares, the manner in which we complete the square depends upon the sign of $4-m$, but in both cases one finds that $(h,h)_{\lambda}>0$ for all non-zero $h$. Applying the theorem, we conclude that (\ref{eq:surfMNstab}) is the only possible totally stable fixed point. The uniqueness of the stable fixed point allows us to easily determine ranges of $m$ and $N$ for which there exist no totally stable fixed points at all. The stability matrix at the symmetry preserving fixed point is given by
\begin{equation}
\begin{split}
    S_{i_a j_b ,k_cl_d}=&\frac{(m-4)(N-6)}{2m(N-16)+8(N+2)}\varepsilon(\delta_{i_a k_c}\delta_{j_b l_d}+\delta_{il}\delta_{jk})+\frac{4-m}{m(N-16)+8(N+2)}\varepsilon\delta_{i_a j_b}\delta_{k_c l_d}\\&+\frac{N-4}{m(N-16)+8(N+2)}\delta_{abcd}(\delta_{ij}\delta_{kl}+\text{Perms.})\,.
\end{split}
\end{equation}
Contracting this with two factors of an arbitrary vector $v_{ij}$, one finds after some tedious but straightforward algebra that the stability of (\ref{eq:surfMNstab}) requires a number of conditions on $m$ and $N$. First, the positivity of off-diagonal $v_{ij}^2$ terms requires that
\begin{equation}
    m (N-6)-2 (N-8)\geq 0\quad \text{and} \quad (m-4) (N-6)\geq 0\,.
\end{equation}
One then finds after completing the square, that requiring all of the coefficients of the resulting terms be positive yields different conditions depending upon the size of $m$ and $N$. If $m>4$, then we must have
\begin{equation}
    (m-4) (2 m-N-4)\geq 0\,,
\end{equation}
while if $m\leq4$ we must have
\begin{equation}
    2 (m (N-5)-N+8)\geq0\quad \text{if }N\leq4\,,
\end{equation}
where we also assume by construction that $m<N$. The region satisfying these inequalities\footnote{Note that for $m=N$ $\lambda_{ijkl}=0$, so that the symmetry preserving defect will always be stable. For this value of $m$ there is a special cancellation between the off-diagonal terms which explains why points which appear to violate the bounds in fact remain stable.} are shown in Figure \ref{fig:MNregion}.
\begin{figure}[t!]
    \centering
\begin{tikzpicture}
\begin{axis}[
    xmin=1, xmax=20,
    ymin=1, ymax=10,
    xlabel= $N$,
    ylabel= $m$,
    ylabel style={rotate=-90},
    xticklabel style={scaled ticks=false, /pgf/number format/fixed, /pgf/number format/precision=3},
    yticklabel style={scaled ticks=false, /pgf/number format/fixed, /pgf/number format/precision=3},
    title={Region of stability for $MN$ symmetry preserving defect},
    legend pos = outer north east,
    view={0}{90}
]
\addplot+[
    no marks,color=black,fill=red!90!black,opacity=0.5]
table{datafiles/MNRegionUpper.dat};
\addplot+[
    no marks,color=black,fill=red!90!black,opacity=0.5]
table{datafiles/MNRegionLower.dat};
\end{axis}
\end{tikzpicture}
    \caption{The region of stability for the symmetry preserving defect in the MN model. Within the colored region the symmetry preserving defect is the unique stable fixed point, and outside of the colored region there exist no stable fixed points whatsoever.}
    \label{fig:MNregion}
\end{figure}
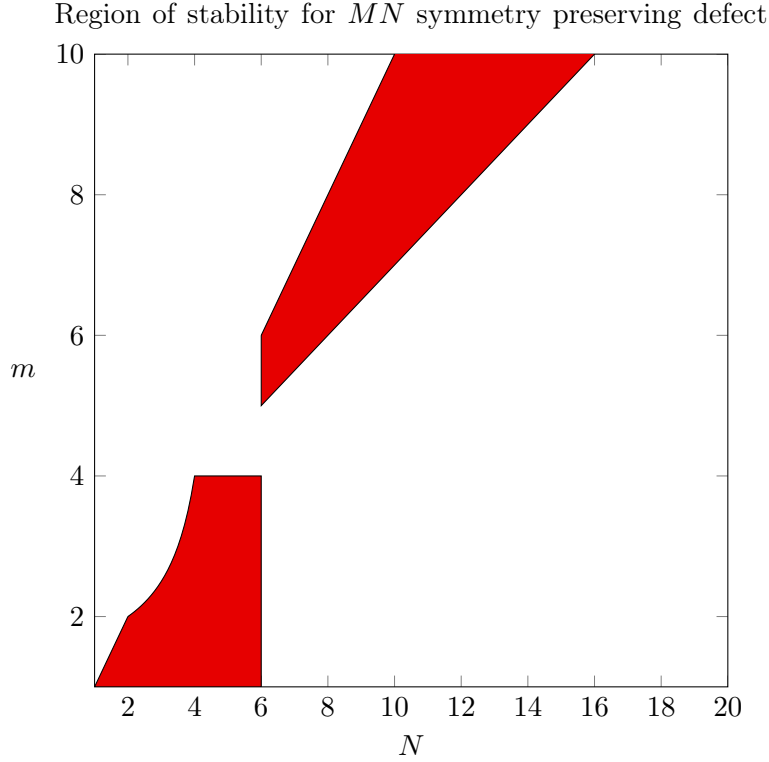
As is to be expected, setting $m=1$ yields the simple condition that $N\leq 6$ for stability, consistent with our analysis of the hypercubic model.

\subsection{Hypertetrahedral model}
Finally, we consider a bulk invariant under the symmetry group of a hypertetrahedron $T_N=S_{N+1}\times \mathbb{Z}_2$. As noted before, there are two bulk fixed points,
\begin{equation}
\begin{split}
    \lambda^+_{ijkl}=& \frac{\varepsilon}{24-15N+3N^2}\left(\delta_{ij}\delta_{kl}+\delta_{ik}\delta_{kl}+\delta_{il}\delta_{jk}\right)+\frac{(N-4)(N+1)\varepsilon}{3(N^2-5N+8)}\sum_\alpha e_i^\alpha e_j^\alpha e_k^\alpha e_l^\alpha\,,\\
    \lambda^-_{ijkl}=& \frac{\varepsilon}{3(N+3)}\left(\delta_{ij}\delta_{kl}+\delta_{ik}\delta_{kl}+\delta_{il}\delta_{jk}\right)+\frac{(N+1)\varepsilon}{3(N+3)}\sum_\alpha e_i^\alpha e_j^\alpha e_k^\alpha e_l^\alpha\,.
\end{split}
\end{equation}
The form of $(h,h)_\lambda$ is significantly more complicated than for the previously considered models due to the inclusion of the $e^\alpha$ vectors, and we will consequently not be able to derive full analytic results. The explicit expressions for $(h,h)_\lambda$ for the two fixed points are
\begin{equation}
\begin{split}
    (h,h)_{\lambda^+}=&\frac{(3N^2-15N+22)\varepsilon}{3(8-5N+N^2)}h_{ij}h_{ij}-\frac{\varepsilon}{3(8-5N+N^2)}(h_{ii})^2-\frac{(N^2-3N-4)\varepsilon}{3(8-5N+N^2)}\sum_\alpha (e^\alpha_i e^\alpha_j h_{ij})^2\,,\\
    (h,h)_{\lambda^-}=&\frac{(3N+7)\varepsilon}{3(N+3)}h_{ij}h_{ij}-\frac{\varepsilon}{3(N+3)}(h_{ii})^2-\frac{(N+1)\varepsilon}{3(N+3)}\sum_\alpha (e^\alpha_i e^\alpha_j h_{ij})^2\,.
\end{split}
\end{equation}
One can check numerically that these expressions will be positive definite for $N\leq 6$, and we expect that positive-definiteness will hold for all $N$. As $N$ increases, most elements of the vectors $e^\alpha$ will be suppressed like $1/N$, so that as the coefficient of the $h_{ij}h_{ij}$ term will also be asymptotically larger than the third coefficient. One should then always be able to complete the square to make this expression manifestly positive. As one can easily check, $d_{2,ij}=0$ in both cases, so there will always exist a symmetry preserving fixed point, which is
\begin{equation}
    h_{ij}=\frac{2(11-6N+N^2)}{3(8-5N+N^2)}\varepsilon \delta_{ij}\,,
\end{equation}
when we choose the bulk to be $T^+$ and
\begin{equation}
    h_{ij}=\frac{N+7}{3(N+3)}\varepsilon \delta_{ij}\,,
\end{equation}
when choosing a $T^-$ bulk. A simple check of stability for these points is examining the trace of the stability matrix, which one finds to be
\begin{equation}
    S_{ij,ij}=\begin{cases} \frac{N(24+5N-6N^2+N^3)}{6(8-5N+N^2)}\varepsilon & \lambda_{ijkl}=\lambda^+_{ijkl} \\
      \frac{N(9-N)(N+1)}{6(N+3)}\varepsilon & \lambda_{ijkl}=\lambda^-_{ijkl}
    \end{cases}\,.
\end{equation}
One can easily see that while the former of these is always positive, the latter is negative for $N\geq 10$. Given that we expect $(h,h)_\lambda$ to be positive definite in both of these cases for all $N$, the uniqueness theorem thus allows us to conclude that there exist no totally stable fixed points in a $T_N^-$ bulk for $N\geq 10$.

\section{Stability in scalar interface theories}\label{interfacesec}
Let us finally turn to the case of a $p=3-\varepsilon$ dimensional interface defect. The deformation is now cubic in the bulk field $\phi_i$, with the defect action taking the form
\begin{equation}
    S_{interface}=\int d^{d-1}\vec{x}\, h_{ijk}\phi_i(0,\vec{x})\phi_j(0,\vec{x})\phi_k(0,\vec{x})\,,
\end{equation}
for a totally symmetric interaction tensor $h_{ijk}$. The beta function for these defect couplings, calculated previously in \cite{Harribey:2023xyv,Harribey:2024gjn}, is found to be
\begin{equation}
    \beta_{ijk}=-\frac{\varepsilon}{2}h_{ijk}-\frac{1}{4}h_{iab}h_{jbc}h_{kac}+\big(\lambda_{ijab}h_{abk}+\text{Perms.}\big)\,.
    \label{eq:betainterface}
\end{equation}
Once more we write this beta function in terms of a metric and an $A$-function as $\beta^{I}=T^{IJ}\partial_J A$ for the choice
\begin{equation}
    A=-\frac{\varepsilon}{4}h_{ijk}h_{ijk}\,,\quad T_{ijk,abc}=\frac{1}{6}\left(\delta_{ia}\delta_{jb}\delta_{kc}+\text{Perms.}\right)+\frac{1}{2\varepsilon}h_{iam}h_{jbm}\delta_{kc}-\frac{2}{\varepsilon}(\lambda_{iab}\delta_{kc}+\lambda_{ikab}\delta_{jc}+\lambda_{jkab}\delta_{ic})\,.
\end{equation}
To analyze the stability properties of this beta function, let us first define the radius in coupling space to be the invariant $r^2=h_{ijk}h_{ijk}$. This $A$ function will evolve along RG flows according to the equation
\begin{equation}
    \frac{dA}{d\ln{\mu}}=r^2\left(\frac{\varepsilon^2}{4}-\frac{3\varepsilon}{2}\lambda_{ijab}\hat{h}_{abk}\hat{h}_{ijk}+\frac{\varepsilon}{8}\hat{h}_{iab}\hat{h}_{jbc}\hat{h}_{kac}\hat{h}_{ijk}r^2\right)\,,
\end{equation}
where $\hat{h}_{ijk}=h_{ijk}/r$ is a unit vector. All non-trivial fixed points must then lie on the surfaces defined by
\begin{equation}
    r^2=-\frac{2\varepsilon-12\lambda_{ijab}\hat{h}_{abk}\hat{h}_{ijk}}{\hat{h}_{iab}\hat{h}_{jbc}\hat{h}_{kac}\hat{h}_{ijk}}\,.
    \label{eq:rsolinterface}
\end{equation}
This expression will only have a real root as long as the right hand side is positive, so that either the numerator or denominator must be negative at each fixed point. One sees that close to the origin the sign of $dA/d\ln{\mu}$ will be determined by the sign of
\begin{equation}
    \frac{\varepsilon^2}{4}-\frac{3\varepsilon}{2}\lambda_{ijab}\hat{h}_{abk}\hat{h}_{ijk}\,.
\end{equation}
If $h^*_{ijk}$ is a fixed point and this expression is negative for $\hat{h}_{ijk}=h^*_{ijk}/\sqrt{h^*_{ijk}h^*_{ijk}}$, then perturbating towards the trivial defect will trigger an RG flow away from the fixed point. Stable, non-trivial fixed points must thus satisfy
\begin{equation}
    \frac{3\varepsilon}{2}\lambda_{ijab}\hat{h}_{abk}\hat{h}_{ijk}<\frac{\varepsilon^2}{4}\,,\qquad \hat{h}_{iab}\hat{h}_{jbc}\hat{h}_{kac}\hat{h}_{ijk}<0\,,
    \label{eq:interfacestabcondition}
\end{equation}
where the second equation follows from the unitarity requirement that $r$ is real. This analysis can be made more precise by contracting the stability matrix $S_{ijk,abc}$ with two factors of $h^*_{ijk}$ at a generic fixed point,
\begin{equation}
    h^*_{ijk}S_{ijk,abc}h^*_{abc}=-\frac{\varepsilon}{2}h^*_{ijk}h^*_{ijk}-\frac{3}{4}h^*_{iab}h^*_{jbc}h^*_{kac}h^*_{ijk}+3\lambda_{ijab}h^*_{abk}h^*_{ijk}=(\varepsilon-6\lambda_{ijab}\hat{h}_{abk}\hat{h}_{ijk})r^2\,,
\end{equation}
where in the last line we have used the fact that $\beta^*_{ijk}=0$ at the fixed point. If $\varepsilon<6\lambda_{ijab}\hat{h}_{abk}\hat{h}_{ijk}$ then $h^T\textbf{S}h<0$, which guarantees the existence of a relevant operator at the fixed point.

\begin{theorem}
    For $N\geq6$ there will be no totally stable fixed points if the bulk is taken to be critical. This can be expanded to any $N$ if the bulk is free.
\end{theorem}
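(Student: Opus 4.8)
The plan is to exploit the fact that total stability is equivalent to the statement that every eigenvalue of the stability matrix $S_{ijk,abc}=\partial\beta_{ijk}/\partial h_{abc}$, viewed as an operator on the space of totally symmetric rank-$3$ tensors, is positive. Unlike the surface case, no Michel-type uniqueness theorem is available here, so I cannot reduce the problem to the symmetry-preserving defect; instead I would establish a \emph{universal} necessary condition that must hold at \emph{every} non-trivial fixed point, namely $\Tr S>0$. The whole theorem then reduces to showing that $\Tr S<0$ at all non-trivial solutions of $\beta_{ijk}=0$ once $N\geq6$ (critical bulk), and for all $N\geq1$ when $\lambda_{ijkl}=0$ (free bulk), since a negative trace forces at least one negative eigenvalue and hence a relevant deformation.

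First I would compute $\Tr S$ term by term. The linear piece $-\tfrac{\varepsilon}{2}h_{ijk}$ contributes $-\tfrac{\varepsilon}{2}\dim$, where $\dim=\binom{N+2}{3}$ is the dimension of the space of symmetric $3$-tensors. The cubic piece is most cleanly handled by writing $h_{iab}h_{jbc}h_{kac}=\Tr(H_iH_jH_k)$ with the symmetric matrices $(H_i)_{ab}=h_{iab}$; its contribution to the trace is quadratic in $h$ and, by $O(N)$-covariance, must reduce to the two invariants $r^2=h_{ijk}h_{ijk}$ and $t^2=h_{iik}h_{jjk}\geq0$ with combinatorial, $N$-dependent coefficients. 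The bulk term contributes a piece linear in $\lambda$. I would then use the fixed-point equation itself — contracted both with $h_{ijk}$ and with $\delta_{jk}$ — to trade the fully contracted cubic $Q=h_{iab}h_{jbc}h_{kac}h_{ijk}$ and the $\lambda$ contraction for the radial combination $P=\lambda_{ijab}h_{abk}h_{ijk}$ and for $r^2$, thereby removing as much shape data as possible. The crucial quantitative point, and the reason a finite threshold can appear at all, is that the fixed-point normalization forces $r^2\sim\varepsilon N^2$, so that the cubic contribution to the trace is $O(\varepsilon N^3)$ and genuinely competes with the $-\tfrac{\varepsilon}{2}\dim\sim O(\varepsilon N^3)$ linear contribution; the balance of these two is what should produce the value $N=6$, exactly as the factor $(6-N)$ emerged in the $O(N)$ surface analysis.

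To close the inequality I would feed in the two stability constraints already derived above the theorem: radial stability, i.e. $h^*_{ijk}S_{ijk,abc}h^*_{abc}=(\varepsilon-6\lambda_{ijab}\hat h_{abk}\hat h_{ijk})r^2>0$, which bounds $6P<\varepsilon r^2$, and the reality/unitarity requirement $\hat h_{iab}\hat h_{jbc}\hat h_{kac}\hat h_{ijk}<0$. These pin down the sign of the cubic contraction and let me eliminate $P$ in favour of $r^2$. For the free bulk this is immediate: $P=0$, $h^*$ is an exact eigenvector with eigenvalue $\varepsilon$ (from Euler's theorem applied to the degree-$1$ and degree-$3$ parts of $\beta$), and the remaining trace collapses to a manifestly negative quantity at any non-trivial fixed point for all $N\geq1$, giving the stronger free-bulk statement. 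The hard part will be the combinatorial evaluation of the cubic trace together with controlling the sign-indefinite shape dependence entering through $t^2$: although $t^2\geq0$, its ratio to $r^2$ varies from fixed point to fixed point, and I would bound it via $0\leq t^2\leq N r^2$ (Cauchy–Schwarz on $h_{iik}$) and the fixed-point relations, so that a clean $N$-independent-of-shape inequality survives. This indefinite $t^2$ term plays here the obstructive role that $(h_1,h_1,h_2,h_2)_\lambda$ played in the line-defect case, and taming it is precisely what must be done before the threshold $N\geq6$ emerges cleanly.
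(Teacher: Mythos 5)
Your overall strategy---obstruct total stability via $\Tr S<0$---is exactly the paper's, and your free-bulk half goes through essentially as in the paper: with $\lambda_{ijkl}=0$ the trace is
\[
\Tr S=-\frac{N(N+1)(N+2)}{12}\,\varepsilon-\frac{N+2}{8}\left(h_{ijk}h_{ijk}+h_{ijj}h_{ikk}\right)<0
\]
for \emph{every} $h$ (including $h=0$, so no fixed-point input or Euler-theorem eigenvector argument is even needed there). But your plan for the critical-bulk half has a genuine gap in how you treat the $\lambda$-dependence. The $\lambda$-dependent part of $\Tr S$ is $(N+2)\lambda_{iijj}$: a purely bulk invariant containing no factor of $h$ whatsoever. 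Consequently neither of your proposed mechanisms can touch it. The defect fixed-point equation, contracted with $h_{ijk}$ or $\delta_{jk}$, only relates $h$-contractions such as $Q=h_{iab}h_{jbc}h_{kac}h_{ijk}$ and $P=\lambda_{ijab}h_{abk}h_{ijk}$, and the radial stability constraint $6P<\varepsilon r^2$ bounds $P$---but neither $Q$ nor $P$ appears in $\Tr S$ at all, since the trace is only quadratic in $h$ and its $\lambda$ piece is $h$-independent. The missing ingredient is a universal bound on $\lambda_{iijj}$ valid at \emph{any} bulk critical point; the paper imports $\lambda_{iijj}\leq N(N+2)\varepsilon/(N+8)$ from (2.23) of \cite{Osborn:2020cnf}, and the $N\geq6$ threshold then comes from comparing $(N+2)\lambda_{iijj}$ against the linear piece $-\frac{N(N+1)(N+2)}{12}\varepsilon$, i.e.\ from the sign of $16+3N-N^2$ (equivalently the paper's $32+22N+N^2-N^3=(N+2)(16+3N-N^2)$), which is negative precisely for integer $N\geq6$. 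Without some such bulk-side bound your inequality cannot close, since nothing in the defect data limits $\lambda_{iijj}$.

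Relatedly, your diagnosis of where the threshold originates is off, though harmlessly so once corrected. The cubic-in-$h$ contribution to the trace is $-\frac{N+2}{8}(r^2+t^2)$ with $r^2,t^2\geq0$, so it is manifestly nonpositive and never competes toward stability: your worry about a ``sign-indefinite'' $t^2$, the proposed Cauchy--Schwarz bound $t^2\leq Nr^2$, and the estimate $r^2\sim\varepsilon N^2$ are all unnecessary. The only genuine competition in $\Tr S$ is between the dimension term $-\frac{\varepsilon}{2}\binom{N+2}{3}$ (which you computed correctly) and the bulk term $(N+2)\lambda_{iijj}$; replace your elimination-of-$P$ step with the bulk fixed-point bound on $\lambda_{iijj}$ and your argument becomes the paper's proof.
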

\begin{proof}
We begin by noting that the stability matrix takes the form
\begin{equation}
\begin{split}
    S_{ijk,abc}=-\frac{\varepsilon}{12}(\delta_{ia}\delta_{jb}\delta_{kc}+\text{Perms.})+\frac{1}{3}(\lambda_{ijab}\delta_{kc}+\text{Perms.})-\frac{1}{24}(h_{iax}h_{jbx}\delta_{kc}+\text{Perms.})\,,
\end{split}
\end{equation}
where here Perms. refers to summing over cyclic permutations of both $i,j,k$ and $a,b,c$. The trace of the stability matrix will then be given by
\begin{equation}
    S_{ijk,ijk}=-\frac{N\lsp\varepsilon}{12}(2+3N+N^2)+(2+N)\lambda_{iijj}-\frac{N+2}{8}(h_{ijk}h_{ijk}+h_{ijj}h_{ikk})\,.
\end{equation}
Crucially, the last two terms in this expression involve only the sum of squares, and are thus always negative, so that to determine the sign of the trace we must examine the sign of the first two terms
\begin{equation}
    -\frac{N\lsp\varepsilon}{12}(2+3N+N^2)+(2+N)\lambda_{iijj}\,.
\end{equation}
For a free bulk, $\lambda_{ijkl}=0$, so that this will be negative for all $N$. We thus see that in a free bulk every fixed point will have a stability matrix with a negative trace, so that there can be no stable fixed points in a free bulk for any value of $N$. If we take the bulk to lie at an arbitrary critical point, we note that the value of $\lambda_{iijj}$ appearing in this expression can be bounded by (2.23) in \cite{Osborn:2020cnf}, so that
\begin{equation}
    -\frac{N\lsp\varepsilon}{12}(2+3N+N^2)+(2+N)\lambda_{iijj}<\frac{N \lsp\varepsilon}{12(N+8)}(32+22N+N^2-N^3)\,.
\end{equation}
Taking the number of scalars $N$ to be non-negative, the polynomial $32+22N+N^2-N^3$ is positive for $N\leq (3+\sqrt{73})/2\approx5.8$ and then negative for larger $N$. Restricting to an integer number of fields, we thus see that the trace of the stress tensor when the bulk is critical will necessarily be negative for $N\geq6$, so that the theorem holds.
\end{proof}

It is important to note that in this proof we have used the full stability matrix, which unavoidably includes even symmetry breaking deformations. As noted in \cite{Harribey:2024gjn} there do exist examples of fixed points which are stable stable within a one-dimensional symmetry preserving submanifold. The above theorem guarantees only that there exists at least one other direction in which these points are unstable.

To determine what happens at lower values of $N$, let us consider the cases separately. For a single scalar field the beta function has only one component given by
\begin{equation}
    \beta=-\frac{\varepsilon}{2}h-\frac{1}{4}h^3+3\lambda h\,,
\end{equation}
which has two fixed points\footnote{The bulk will have a $\mathbb{Z}_2$ symmetry $\phi\rightarrow -\phi$ for any value of $\lambda$, so that the two non-trivial roots of $\beta$ give physically equivalent fixed points.}
\begin{equation}
    h=0\,,\qquad h=\pm\sqrt{2}\sqrt{6\lambda -\varepsilon}\,.
\end{equation}
One sees that the non-trivial point will only correspond to a unitary dCFT if $6\lambda>\varepsilon$. As the stability matrix will be $(6\lambda-\varepsilon)/2$ for the trivial defect, and $\varepsilon-6\lambda$ at the interacting fixed point, one sees that there cannot exist a non-trivial, unitary stable fixed point for $N=1$. For two scalar fields there are four independent couplings, which in the notation of \cite{Harribey:2024gjn} are
\begin{equation}
    k_1=h_{111} \, ,\quad k_2=h_{222} \, , \quad g_1=h_{122} \, , \quad g_2=h_{112} \, .
\end{equation}
Examining the polynomial
\begin{equation}
    h_{iab}h_{jbc}h_{kac}h_{ijk}=4 g_1^3 k_1+6 g_1^2 k_2^2+12 g_2 g_1^2 k_2+12 g_2^2 g_1 k_1+6 g_2^2 k_1^2+4 g_2^3 k_2+3 g_1^4+12 g_2^2 g_1^2+3 g_2^4+k_1^4+k_2^4\,,
\end{equation}
one finds that it is only zero along the plane defined by $k_1=-g_1$ and $k_2=-g_2$. Examining the Hessian of $h_{iab}h_{jbc}h_{kac}h_{ijk}$ on this plane, one finds that this is a plane of minima, so that $h_{iab}h_{jbc}h_{kac}h_{ijk}$ is always non-negative. As $\hat{h}_{iab}\hat{h}_{jbc}\hat{h}_{kac}\hat{h}_{ijk}$ is just the restriction of this polynomial to the unit sphere, one sees that (\ref{eq:interfacestabcondition}) cannot be satisfied for any non-trivial fixed point. Thus, the only fixed point for $N=2$ that can possibly be stable is the trivial interface, $h_{ijk}=0$, whose stability depends upon the size of $\lambda_{ijkl}$.

For $N=3$, 4 and 5 we rely upon the numerical results presented in \cite{Harribey:2024gjn}. They claim to have a complete classification of interface fixed points for $N=3$, finding no totally stable fixed points beyond the trivial defect in any of the possible bulk models. For $N=4$ and 5 they find no non-trivial stable points for $O(N)$, hypercubic and hypertetrahedral fixed bulks, but do not study interfaces within all possible critical bulk models. On the weight of this numerical evidence, we conjecture that the conclusions at lower and higher values of $N$ will continue to be true here, and that in general the stable fixed point will be unique, and, if it exists, can only be the trivial defect point $h_{ijk}=0$.

The free defect will be stable when all cubic operators in the bulk theory have dimension greater than $3-\varepsilon$, and thus will be irrelevant deformations. The stability matrix at the free defect will only include contributions from the linear terms in the beta function, and takes the simpler form
\begin{equation}
    S_{ijk,abc}=-\frac{\varepsilon}{12}(\delta_{ia}\delta_{jb}\delta_{kc}+\text{Perms.})+\frac{1}{3}(\lambda_{ijab}\delta_{kc}+\text{Perms.})\,.
\end{equation}
For $N\leq 5$ there are a total of 19 fully interacting fixed points one may choose for the bulk critical model\cite{Osborn:2020cnf}, along with a number of decoupled fixed points. Explicitly checking the positive-definiteness of the above expression for each of these bulk models, we find that the only bulks for which the free point is stable are the $O(2)$, $O(3)$, $O(4)$ and $B_4$ models.

\section{Conclusion}
We have provided a general treatment of defect stability in multiscalar models, demonstrating that Michel's theorem survives for only for surface defects. Only a local version, still determined by the minimization of an $A$ function provided by the gradiency of the beta function, holds for line defects. For interface defects we instead find that unusually there exist no examples of non-trivial fixed points which are totally stable regardless of the form of the bulk theory. 

The local test of stability presented in this paper for line defects seems to be applicable to a wide range of systems. The observation that fixed points are embedded within a surface along which the beta function has no radial component relied on the construction of a metric $T_{ij}$, such that $\beta_i\propto T_{ij}\partial_j r^2$. Such a metric can be constructed very simply for any theory, simply by taking terms in the beta function and stripping away one factor of the coupling. $A\propto r^2$, suitably defined for theories with higher rank interaction tensors, should thus enjoy the same nice monotonicity properties that were seen in the case of line defects, with fixed points living along surfaces in coupling space defined by $d r/d\ln{\mu}=0$. The existence of this manifold may be a new, useful tool in investigating the structure of fixed points arising in various systems. For instance, it may be possible to apply the method of finding fixed points by extremizing the radius of this surface to prove that all possible biconical fixed points exist in multiscalar models as numerical evidence in \cite{Osborn:2020cnf} suggests. It is already known that there can exist multiple stable fixed points for scalar-fermion theories\cite{Pannell:2023tzc}, and while a modified form of Michel's theorem applies to purely scalar-deformations of these theories, it seems that a local, rather than global, test is necessary for understanding stability of these points with respect to generic deformations. 

One could also consider extending the bulk action to include fermions, with the bulk taken to lie at a critical point with a non-zero Yukawa interaction\cite{Pannell:2023tzc,Jack:2023zjt}, and asking whether or not the conclusions about fixed point stability reached in this work continue to hold. Line defects inside scalar-fermion bulk have previously been studied in \cite{Giombi:2022vnz,Pannell:2023pwz}, with the beta function being known to two-loops\footnote{The obstruction to gradiency for line defects noted in \cite{CarrenoBolla:2023vrv} at two-loops can be removed by the inclusion of a non-trivial metric $G^{IJ}$.}. An extension of surface and interface defects to include fermions in the bulk is also of interest.

\ack{We would like to thank Andreas Stergiou for various discussions about these ideas, as well as for providing comments on this manuscript. We would also like to thank Hugh Osborn for kindly looking over a draft of this work.}

\bibliography{main}

\end{document}